\documentclass[
  10pt,
  nofootinbib,
  superscriptaddress,
  onecolumn,
  amsfonts,
  amssymb,
  amsmath,
  aps,
  pra,
  floatfix
]{revtex4-2}
\usepackage{mathtools}
\usepackage{mathrsfs}
\usepackage{amsthm}
\usepackage{bm}
\usepackage{array}
\usepackage{dcolumn}
\usepackage{multirow}
\usepackage{booktabs}
\usepackage{enumitem}
\usepackage{graphicx}
\usepackage{tikz}
\usetikzlibrary{positioning,arrows.meta,calc,fit,backgrounds}
\usepackage{csquotes}
\usepackage{comment}
\usepackage{xcolor}
\usepackage[normalem]{ulem}
\usepackage{orcidlink}

\usepackage{xurl}
\usepackage{hyperref}
\hypersetup{
  colorlinks=true,
  citecolor=cyan,
  urlcolor=cyan,
  linkcolor=cyan,
  breaklinks=true,
}
\newtheoremstyle{break}
  {} {} {\itshape} {} {\bfseries} {.} {\newline} {\thmname{#1}\thmnumber{ #2}\thmnote{ (\bfseries #3)}}
\theoremstyle{break}
\newtheorem{theorem}{Theorem}
\newtheorem{mainresult}{Main Result}

\newtheorem{lemma}{Lemma}

\newtheorem{corollary}{Corollary}

\newtheorem{remark}{Remark}

\newenvironment{proofof}[1]{\begin{proof}[Proof of #1]}{\end{proof}}
\DeclareMathOperator{\Tr}{Tr}

\let\oldd\d \renewcommand{\d}{\ifmmode\mathrm{d}\else\oldd\fi}
\let\oldi\i \renewcommand{\i}{\ifmmode\mathrm{i}\else\oldi\fi}

\let\Log\relax \DeclareMathOperator{\Log}{Log}
\newcommand{\ket}[1]{| {#1} \rangle}
\newcommand{\bra}[1]{\langle {#1} |}
\newcommand{\braket}[1]{\langle #1 \rangle}
\newcommand{\ketbra}[2]{| {#1} \rangle\langle {#2} |}
\newcommand{\brakets}[3]{\langle {#1} | {#2} | {#3} \rangle}
\newcommand{\norm}[1]{\left\| {#1} \right\|}

\newcommand{\e}{\mathrm{e}}
\newcommand{\eps}{\epsilon}

\newcommand{\eq}[1]{\begin{align} #1 \end{align}}

\newcommand{\bbR}{\mathbb{R}}
\newcommand{\bbC}{\mathbb{C}}
\makeatletter
\@tfor\Letter:=ABCDEFGHIJKLMNOPQRSTUVWXYZ\do{%
  \expandafter\edef\csname rm\Letter\endcsname{\noexpand\mathrm{\Letter}}
  \expandafter\edef\csname bf\Letter\endcsname{\noexpand\mathbf{\Letter}}
  \expandafter\edef\csname sf\Letter\endcsname{\noexpand\mathsf{\Letter}}
  \expandafter\edef\csname cal\Letter\endcsname{\noexpand\mathcal{\Letter}}
  \expandafter\edef\csname scr\Letter\endcsname{\noexpand\mathscr{\Letter}}
}
\makeatother
\definecolor{lightblue}{rgb}{0.678, 0.847, 0.902}

\begin{document}

\title{A Path Integral Model of Cognition}

\author{Haruki Emori\,\orcidlink{0009-0007-2264-9192}}
\email{emori.haruki.i8@elms.hokudai.ac.jp}
\affiliation{Graduate School of Information Science and Technology, Hokkaido University, Kita 14, Nishi 9, Kita-ku, Sapporo, Hokkaido 060-0814, Japan}
\affiliation{RIKEN Center for Interdisciplinary Theoretical and Mathematical Sciences (iTHEMS), 2-1 Hirosawa, Wako, Saitama, 351-0198, Japan}
\affiliation{Teikyo University Advanced Comprehensive Research Organization (ACRO), 2-21-1 Kaga, Itabashi-ku, Tokyo, 173-0003, Japan}

\author{Kazunori Kondo\,\orcidlink{0009-0002-5798-2837}}
\email{kondo.hus@osaka-u.ac.jp}
\affiliation{Graduate School of Human Sciences, Department of Human Sciences, The University of Osaka, 1-2 Yamadaoka, Suita, Osaka, 565-0871, Japan}

\author{Atsushi Iriki\,\orcidlink{0000-0002-5262-163X}}
\email{iriki.atsushi.lh@teikyo-u.ac.jp}
\affiliation{Teikyo University Advanced Comprehensive Research Organization (ACRO), 2-21-1 Kaga, Itabashi-ku, Tokyo, 173-0003, Japan}
\affiliation{RIKEN Center for Interdisciplinary Theoretical and Mathematical Sciences (iTHEMS), 2-1 Hirosawa, Wako, Saitama, 351-0198, Japan}
\affiliation{Brain Mind and Consciousness Program, Canadian Institute for Advanced Research (CIFAR), 661 University Avenue, Suite 505, Toronto, ON M5G 1M1 Canada}

\author{Andrei Khrennikov\,\orcidlink{0000-0002-9857-0938}}
\email{andrei.khrennikov@lnu.se}
\affiliation{Center for Mathematical Modeling in Physics and Cognitive Sciences, Linnaeus University, V{\"a}xj{\"o}, SE-351 95, Sweden}

\date{\today}

\begin{abstract}
We develop the mathematical and physical formulation of cognitive cost optimization that underlies the path-integral model of consciousness.
The goal-directed cognitive process is modeled as imaginary-time evolution (ITE) under a projector Hamiltonian that rewards configurations consistent with a target concept.
We establish three results.
First, this ITE coincides with a double-bracket flow and is therefore the Riemannian gradient flow of a Hilbert--Schmidt cost whose unique minimum is the solution.
Second, a Wick rotation re-expresses this non-unitary descent as an equivalent unitary evolution on the same Hilbert space, which admits an exact discrete path-integral representation in which the oracle and the initial-state diffusion projector play the roles of potential and kinetic energy.
Third, we identify the continuum from unconscious to conscious processing with the strength of the unitary interaction between the cognitive system and a neural-environment probe, recovering the Gorini--Kossakowski--Sudarshan--Lindblad (GKSL) decoherence model of Asano \textit{et al.} in the Markovian weak-coupling limit, and the projective, reportable fixation of an optimized state in the strong-coupling limit, an insight-like ``Aha'' endpoint.
Both regimes share the same ITE and path-integral structure, and only the measurement-interaction strength varies.
The Wick rotation is therefore a technique of re-description, not a physical regime change.
\end{abstract}

\maketitle

\section{Introduction}
\label{sec:intro}
Contemporary consciousness research operates with a plurality of theories, among them Integrated Information Theory (IIT), the Global Neuronal Workspace (GNW), Recurrent Processing Theory (RPT), Higher-Order Thought (HOT), and Predictive Processing (PP), each offering a distinct account of how subjective experience arises in the brain~\cite{tononi2016,mashour2020,lamme2006,lau2011,friston2010}.

This paper is one component of a three-paper measurement-strength program.
Ref.~\cite{emori2026cc} introduces measurement strength as a coordinate axis for translating between IIT, GNW, RPT, HOT, PP, and related theories.
Ref.~\cite{emori2026ql} supplies the contextual and orthomodular foundation, showing how classical Boolean records arise when context is forgotten.
The present paper supplies the dynamical machinery, deriving how a cognitive state evolves, optimizes, and becomes weakly or strongly measured along that same axis.
A path integral model underlies the measurement-strength account of cognition and consciousness developed across the three papers.
The two formal elements of the continuum, the path integral and projective measurement~\cite{khrennikov2010,busemeyer2012,vonneumann1955,ozawa1984,ozawa2023}, are made rigorous and operative here through imaginary-time evolution (ITE) and the double-bracket flow (DBF).

For cognitive scientists, the path integral is a formal language for many possible cognitive histories rather than a claim about microscopic quantum processes in the brain.
Imaginary-time evolution and the double-bracket flow describe how this space of candidates is gradually biased toward a task-relevant or reward-consistent configuration.
Measurement strength then specifies how strongly the neural environment, the task, or the report demand reads out that evolving state.

A purely classical trajectory picture is not sufficient.
Classical optimization assumes that the alternatives, the cost function, and the sample space are fixed in advance.
Cognitive experiments often violate this assumption, since the order of questions matters, the task can change the state being measured, and a report can help determine the content it reports.
The present paper asks how such context-sensitive candidates evolve dynamically before one of them becomes a classical record.

Our object of study is the dynamical law that governs cognitive optimization.
Given an indeterminate initial mental state $\ket{\psi_{0}}$, a uniform superposition over conceptual elements, and an oracle of meaning $H_{f}$, a projector onto target concepts, we ask how the cognitive state evolves so as to minimize its divergence from the target.
Our central thesis is summarized as follows.
\begin{quote}
\itshape
Cognitive optimization, both unconscious and conscious, is governed by imaginary-time evolution (ITE) on a projector Hamiltonian.
The Wick rotation re-expresses the standard unitary description of the underlying system as an ITE, whose discrete path integral makes cost-minimizing trajectories computationally accessible.
The unconscious-to-conscious transition is a continuous deformation of the measuring-interaction strength between the cognitive system and its neural-environment probe.
\end{quote}
The dynamical claim of the present paper is that cognitive optimization is governed by a single path-integral structure, while the transition from unconscious candidate dynamics to conscious reportable fixation is controlled by the measuring-interaction strength between the cognitive system and its neural-environment probe.
This measuring-interaction strength is the dynamical realization of the measurement-strength axis used in Ref.~\cite{emori2026cc} to organize theories of consciousness.

Three threads substantiate this thesis.
The first is mathematical.
ITE on a projector Hamiltonian coincides with the double-bracket flow generated by a commutator, and is the Riemannian gradient flow of a least-squares Hilbert--Schmidt cost function~\cite{brockett1991,helmke2012,wiersema2023,gluza2026,mcmahon2025}.
This identifies the cost that ITE minimizes as the squared Hilbert--Schmidt distance from the target concept.
The second is physical.
Although ITE is formally non-unitary, it is connected, through the Wick rotation $t = -\i\tau$, to a unitary dynamics on the same Hilbert space generated by the effective Hamiltonian $\calH_{\text{eff}} = \i[H_{f},\psi_{0}]$.
Following recent work on double-bracket quantum algorithms~\cite{gluza2024,gluza2026,gilyen2019,martyn2021}, this unitary admits an \textit{exact} discrete path-integral representation on a finite-dimensional Hilbert space, with no continuum limit and no measure-theoretic ambiguity.
The third is foundational.
The open-system reformulation of cognitive dynamics via Gorini--Kossakowski--Sudarshan--Lindblad (GKSL)-type decoherence~\cite{gorini1976,lindblad1976} introduced by Asano, Ohya, Tanaka, Basieva, and Khrennikov~\cite{asano2011} interprets the gradual emergence of a decision as a sequence of weak measurements, which we recover here, under a Markovian idealization, as the open-system reduction of the discrete cognitive path integral.

For cognitive-science readers, the formalism has four intended contact points.
First, the finite Hilbert space can be read as a representational space, including distributed vector representations~\cite{piantadosi2024}.
Second, the Hilbert--Schmidt cost plays a role analogous to a task objective or a self-prior~\cite{kim2025}.
Third, weak monitoring by the neural environment yields open-system dynamics related to decoherence and to nonequilibrium signatures~\cite{nartallo2026}.
Fourth, embodied action can be interpreted as pruning the space of virtual cognitive trajectories~\cite{zou2025}.

The organizing distinction of our framework is between mathematical \textit{description} and physical \textit{process}.
The Wick rotation is not a physical transition from one cognitive regime to another.
It is a mathematical bridge between two equivalent descriptions of the same dynamical content, ITE on the cost function on the one hand and unitary evolution on the same Hilbert space on the other.
Both unconscious and conscious cognitive processing is described, at the level of dynamics, by the same ITE flow on the projector Hamiltonian $H_{f}$.
What differentiates them is not the form of the dynamical law but the \textit{strength of the measurement coupling} between the cognitive system and the neural environment that acts as its probe.
These three threads fix the role of the present paper within the broader program.
Ref.~\cite{emori2026cc} uses the same measurement-strength axis as an empirical and theoretical map, Ref.~\cite{emori2026ql} explains why context-sensitive, non-Boolean structure is required, and the present paper shows how candidate histories, cost descent, and system--probe measurement strength are generated dynamically.

We adopt the indirect measurement model~\cite{vonneumann1955,ozawa1984,ozawa2023}, in which every physical measurement is realized through a unitary interaction $U_{\text{int}}$ between the system and a probe, followed by a projective readout of the probe.
The strength of the unitary interaction controls whether the resulting measurement is \textit{weak}, with small disturbance, partial information, and preserved interference, or \textit{strong}, with large disturbance, complete information, and projective collapse.
The unconscious regime corresponds to weak coupling, where the cognitive state evolves under ITE while interference among virtual mental trajectories is preserved, and outcomes are averaged to yield, under the additional Markovian idealization of independent and identically prepared probes, the GKSL-type dynamics of Ref.~\cite{asano2011}.
The conscious regime corresponds to strong coupling, where the same ITE evolution is sharply read out by the environment, collapsing the optimized superposition into a single, reportable thought (an insight-like ``Aha'' moment).
Between these extremes lies an intermediate regime that interpolates continuously between the two, recovering the measurement-strength continuum used to organize theories of consciousness in Ref.~\cite{emori2026cc}.

This paper is organized as follows.
Section~\ref{sec:preliminary} reviews the interplay between ITE, DBF, and Riemannian gradient flow, which provides the geometric basis for the model.
Section~\ref{sec:path-integral} reviews the Feynman path integral and the Wick rotation in their standard form, in order to fix vocabulary and to make explicit the correspondence the cognitive model uses.
Section~\ref{sec:pim} contains the main results.
It derives an exact discrete path integral for the unitary form of cognitive ITE (Main Result~\ref{mr:exact_path_integral}), establishes a structural isomorphism with the Feynman path integral, expands the Hilbert--Schmidt cost function to show why ITE drives the state to the solution subspace, and develops the measurement-strength continuum that links the present work to Refs.~\cite{emori2026cc,emori2026ql}.
Section~\ref{sec:conclusion} concludes.
Standard terminology, the relation between Grover's algorithm and ITE, and the geodesic structure of the resulting trajectory are deferred to Appendices~\ref{sec:terminology}--\ref{sec:geo}.
As a reading guide, readers mainly interested in consciousness science may read Secs.~\ref{sec:meas_strength}--\ref{sec:lindblad} and the Conclusion as the empirical interface of the paper, since these sections identify the coupling strength with the measurement-strength variable manipulated by report demand, task pressure, and environmental readout.
Readers interested in the formal derivation should read Secs.~\ref{sec:preliminary}--\ref{sec:hs_expansion} and the appendices.

We close this introduction with two cautions, both inherited from Refs.~\cite{emori2026cc,emori2026ql}.
First, our use of quantum vocabulary is mathematical, not ontological, and we do \textit{not} claim that the brain is a quantum physical device at biological scales.
We claim that cognitive state transitions exhibit the algebraic non-commutativity, contextual probability-space construction, and observation-induced state change for which the Hilbert-space formalism was designed, and that the ITE and path-integral picture together with the measurement-strength continuum provides a common language for these features.
Second, the path-integral derivation we present is an algebraic identity on a finite-dimensional Hilbert space.
It does not invoke any continuum limit, and it should not be confused with the formal manipulations of measure-zero functional integrals to which the standard Feynman construction is sometimes pushed~\cite{glimm1987}.
The exactness of this representation is what makes the framework implementable, in particular on near-term quantum hardware through the quantum singular value transformation (QSVT)~\cite{gilyen2019,martyn2021,suzuki2025} and the double-bracket quantum algorithm (DBQA)~\cite{gluza2024,gluza2026}.

\section{Dynamical Foundations: Imaginary-Time Evolution, Double-Bracket Flow, and Gradient Descent}
\label{sec:preliminary}
We collect the dynamical objects on which the model is built.
The chain of equivalences established here,
\eq{
\text{ITE} \;\Longleftrightarrow\; \text{DBF on the orbit manifold} \;\Longleftrightarrow\; \text{Riemannian gradient flow of an HS cost,}
}
is the geometric basis for the cognitive optimization dynamics of Sec.~\ref{sec:pim}, and is summarized in Fig.~\ref{fig:equivalence}.
No phenomenological claim is attached to imaginary time here.
The variable $\tau$ is not a hidden mental clock but a coordinate that recasts the same optimization dynamics in a form where candidate histories and cost weighting can be compared.
Additional norms, operator-theoretic conventions, and convergence lemmas are gathered in Appendix~\ref{sec:terminology}.

\begin{figure}[t]
\centering
\begin{tikzpicture}[>=Latex, every node/.style={font=\small}]
\tikzstyle{box}=[draw, rounded corners, align=center, minimum height=10mm, inner sep=5pt]
\node[box] (ite) {Imaginary-time\\ evolution (ITE)\\ $\e^{-\tau H}$};
\node[box, right=20mm of ite] (dbf) {Double-bracket\\ flow (DBF)\\ $\partial_\tau\Psi=[[\Psi,H],\Psi]$};
\node[box, right=20mm of dbf] (rgf) {Riemannian gradient\\ flow of the\\ Hilbert--Schmidt cost};
\draw[<->,thick] (ite) -- (dbf);
\draw[<->,thick] (dbf) -- (rgf);
\node[box, fill=lightblue!40, below=12mm of dbf] (sol) {unique minimum\\ $=$ solution state $\ket{\psi^{*}}$};
\draw[->,thick] (rgf.south) |- (sol.east);
\draw[->,thick] (ite.south) |- (sol.west);
\end{tikzpicture}
\caption{\label{fig:equivalence}The equivalence used throughout the paper.
Imaginary-time evolution of a pure state, the double-bracket flow of its density matrix, and the Riemannian gradient flow of a Hilbert--Schmidt cost function coincide, and all three converge to the state that minimizes the cost.
In the cognitive reading of Sec.~\ref{sec:pim}, this minimum is the configuration consistent with the target concept.}
\end{figure}

\subsection{Double-Bracket Flow as Riemannian Gradient Flow}
\label{sec:dbf_rgf}
We review DBF, a class of matrix-valued ordinary differential equations introduced by Brockett as a steepest-descent method for matrix diagonalization, QR decomposition, and eigenvalue sorting~\cite{brockett1991,deift1983,chu1988,helmke2012}.
Consider a self-adjoint operator $A$ and the unitary orbit manifold
\eq{
\calM(A) = \{U A U^{\dagger} \mid U U^{\dagger} = I\}.
}
For a second self-adjoint operator $B$, define the least-squares cost function
\eq{
C_{B}(X) = -\tfrac{1}{2}\norm{X - B}_{\text{HS}}^{2},\qquad X \in \calM(A),\label{eq:Cost_DBF}
}
where $\norm{\cdot}_{\text{HS}}$ is the Hilbert--Schmidt norm.
Its Riemannian gradient at $X$ is~\cite{helmke2012,wiersema2023}
\eq{
\mathrm{grad}_{X} C_{B}(X) = [X,[X, B]].\label{eq:rg}
}
The derivation of Eq.~\eqref{eq:rg}, summarized in Appendix~\ref{sec:terminology}, rests on the tangency condition $T_{X}\calM(A) = \{[X,\xi] \mid \xi^{\dagger} = -\xi\}$ and the compatibility of the Riemannian metric $\braket{[X,\Omega_1],[X,\Omega_2]} = \Tr[\Omega_1^{\dagger}\Omega_2]$ with the directional derivative of \eqref{eq:Cost_DBF}.
The gradient-descent flow $\partial A(t)/\partial t = -\mathrm{grad}_{A(t)} C_{B}(A(t))$ is the \textit{double-bracket flow}
\eq{
\frac{\partial A(t)}{\partial t} = [[A(t), B], A(t)].\label{eq:dbf}
}

\subsection{Equivalence of ITE and Double-Bracket Flow}
\label{sec:ite_dbf}
Imaginary-time evolution of a pure state is an instance of DBF.
For a Hamiltonian $H$, the normalized ITE state at imaginary time $\tau \in \bbR$ is
\eq{
\ket{\Psi(\tau)} = \frac{\e^{-\tau H}\ket{\Psi(0)}}{\norm{\e^{-\tau H}\ket{\Psi(0)}}}.
}
Differentiating with respect to $\tau$ yields the imaginary-time Schr\"{o}dinger equation
\eq{
\frac{\partial \ket{\Psi(\tau)}}{\partial \tau}
= -\bigl(H - E(\tau)\,I\bigr)\ket{\Psi(\tau)}
= -[H, \Psi(\tau)]\ket{\Psi(\tau)},
}
with $E(\tau) = \brakets{\Psi(\tau)}{H}{\Psi(\tau)}$.
Promoting to the density matrix $\Psi(\tau) = \ketbra{\Psi(\tau)}{\Psi(\tau)}$,
\eq{
\frac{\partial \Psi(\tau)}{\partial \tau} = [[\Psi(\tau), H], \Psi(\tau)],\label{eq:ite_dbf}
}
which is the DBF \eqref{eq:dbf} with $A = \Psi(\tau)$ and $B = H$.
Substituting into \eqref{eq:Cost_DBF},
\eq{
C_{H}(\Psi(\tau))
&= -\tfrac{1}{2}\norm{\Psi(\tau)-H}_{\text{HS}}^{2} \nonumber\\
&= \underbrace{-\tfrac{1}{2}\norm{\Psi(\tau)}_{\text{HS}}^{2} - \tfrac{1}{2}\norm{H}_{\text{HS}}^{2}}_{\text{constants in }\tau} + \underbrace{\Tr[\Psi(\tau) H]}_{=\,E(\tau)}.
\label{eq:ite-cost}
}
The first two terms are conserved along the flow, so minimizing the HS cost $C_{H}$ is equivalent to minimizing the energy, and ITE is the Riemannian gradient-descent flow of $C_{H}$.

\subsection{Application to Unstructured Search and Cognitive Concept Search}
\label{sec:ite_search}
The same formalism applies, with a sign reversal, to unstructured search.
Let $H_{f} = \sum_{x:f(x)=1}\ketbra{x}{x}$ be the projector onto the marked subspace and $\ket{\psi_0}$ the uniform superposition.
The positive-sign ITE
\eq{
\ket{\Phi(\tau)} = \frac{\e^{\tau H_{f}}\ket{\psi_0}}{\sqrt{\brakets{\psi_0}{\e^{2\tau H_{f}}}{\psi_0}}}
}
satisfies
\eq{
\frac{\partial \Phi(\tau)}{\partial \tau} = [[H_{f},\Phi(\tau)],\Phi(\tau)],
}
which is the Riemannian gradient flow of the \textit{ascending} HS cost
\eq{
\widetilde{C}_{H_{f}}(\Psi) = +\tfrac{1}{2}\norm{H_{f}-\Psi}_{\text{HS}}^{2}.\label{eq:g-cost}
}
The sign difference between Eqs.~\eqref{eq:ite-cost} and \eqref{eq:g-cost} reflects only that standard ITE seeks the smallest eigenvalue while unstructured search seeks the largest, and both fit one template by choosing the sign of $H$ appropriately.
A more detailed analysis of this case, including its identification with a geodesic on $\bbC P^{N-1}$, is given in Appendix~\ref{sec:geo}.
In the cognitive interpretation that we develop in Sec.~\ref{sec:pim}, $H_{f}$ is the \textit{oracle of meaning}, a self-adjoint operator that assigns a high energy (reward) to mental configurations consistent with the target concept and zero otherwise.

\section{Path Integral and the Wick Rotation}
\label{sec:path-integral}
To place the cognitive model in its mathematical context, we review the Feynman path integral and the Wick rotation~\cite{feynman1948,dyson1949,feynman1949}.
This section serves two roles.
First, it fixes vocabulary for Sec.~\ref{sec:pim}, where a discrete cognitive path integral is derived.
Second, it makes precise the mathematical content of the analytic continuation $t \to -\i\tau$ that, in Sec.~\ref{sec:exact_pi}, supplies the bridge between the unitary description of a system and the ITE description of cost optimization.
In our framework this Wick rotation is a technique of description, a re-encoding of the same physical content, and not a transition between distinct physical regimes.

For a time-independent Hamiltonian $H = T + V$ with $[T,V] \neq 0$, the unitary propagator $\exp(-\i H t/\hbar)$ cannot be factorized exactly into individual exponentials.
The first-order Trotter--Suzuki product formula approximates
\eq{
\e^{-\i(T+V)t/\hbar} = \lim_{n\to\infty}\left(\e^{-\i T \Delta t /\hbar}\e^{-\i V \Delta t /\hbar}\right)^{n},
\qquad \Delta t = t/n,
}
with $\calO(\Delta t^{2})$ error per step.
Inserting position-basis resolutions of the identity into the resulting product, one obtains
\eq{
K(x_f,t_f;x_i,t_i) = \bra{x_f}\e^{-\i H(t_f - t_i)/\hbar}\ket{x_i}
= \int\!\!\d x_{n-1}\cdots\!\int\!\!\d x_{1}\prod_{j=1}^{n}\brakets{x_j}{\e^{-\i T \Delta t/\hbar}\e^{-\i V \Delta t/\hbar}}{x_{j-1}}.
}
Each single-step amplitude evaluates to
\eq{
\brakets{x_{j}}{\e^{-\i T\Delta t/\hbar}\e^{-\i V\Delta t/\hbar}}{x_{j-1}}
= \sqrt{\frac{m}{2\pi\i\hbar\Delta t}}\exp\!\left\{\frac{\i}{\hbar}\!\left[\frac{m}{2}\!\left(\frac{x_j - x_{j-1}}{\Delta t}\right)^{2} - V(x_{j-1})\right]\!\Delta t\right\},
}
so that in the limit $n\to\infty$,
\eq{
K(x_f,t_f;x_i,t_i) = \int\!\calD[x(t)]\,\e^{\i S[x(t)]/\hbar},\qquad
S[x] = \int_{t_i}^{t_f}\!\calL(x,\dot{x},t)\,\d t.
\label{eq:feynman_pi}
}
The classical limit $\hbar \to 0$ is recovered as a stationary-phase principle, since only paths near $\delta S = 0$ contribute constructively.
The Wick rotation $t \to -\i\tau$ converts \eqref{eq:feynman_pi} into the Euclidean path integral
\eq{
K_{E}(x_f,\tau;x_i,0) = \int\!\calD[x(\tau)]\,\e^{-S_{E}[x(\tau)]/\hbar},\label{eq:euclidean_pi}
}
in which the oscillatory phase becomes a real, exponentially decaying weight.
Two consequences will be relevant in Sec.~\ref{sec:pim}.
First, the Euclidean propagator coincides with the canonical-ensemble partition function $Z = \Tr[\e^{-\beta H}]$ under the identification $\hbar\beta = \tau$, so imaginary-time quantum dynamics and thermal statistical mechanics share the same Boltzmann weight.
Second, the Euclidean weight is integrable in the strict mathematical sense and admits a probability interpretation, and correlation functions are generated by the Euclidean generating functional~\cite{schwinger1951a,schwinger1951b,schwinger1951c,schwinger1953,emori2026qsf},
\eq{
Z_{E}[J] = \int\!\calD\phi\;\e^{-S_{E}[\phi] + \braket{J,\phi}}.
}
The price is the loss of manifest quantum interference, since the complex amplitudes of \eqref{eq:feynman_pi} are traded for positive Boltzmann weights.

The above is the standard physical narrative.
In Sec.~\ref{sec:pim} we use this duality in a specific direction.
We start from a non-unitary ITE that captures cognitive cost optimization, use the Wick rotation $t = -\i\tau$ to re-express it as an equivalent unitary dynamics on the same Hilbert space, and then construct an exact discrete path integral for that unitary on a finite-dimensional Hilbert space.
In this construction the Wick rotation is the technical step that allows the Euclidean ITE description, in which the content is the descent of a cost function, and the Minkowski path-integral description, in which the content is the interference of trajectories, to be used interchangeably as descriptions of the same dynamics.

\section{Main Result: Dynamics along Measurement Strength as a Path Integral}
\label{sec:pim}
We now state the main results.
Section~\ref{sec:exact_pi} derives an exact discrete path integral for the unitary form of cognitive ITE on a finite-dimensional Hilbert space, separating the algebraically exact path-integral identity from the approximate product-formula representation of the cognitive unitary.
Section~\ref{sec:isomorphism} establishes the structural isomorphism between this path integral and the Feynman construction of Sec.~\ref{sec:path-integral}.
Section~\ref{sec:hs_expansion} expands the HS cost function and shows why ITE drives the cognitive state to the marked subspace, with a convergence rate set by the spectral gap.
Section~\ref{sec:meas_strength} states the main interpretive result, namely that both unconscious and conscious cognitive dynamics share the same ITE and path-integral structure, and that the unconscious-to-conscious transition is a continuous deformation of the measurement-interaction strength.
Section~\ref{sec:lindblad} connects the Markovian weak-coupling limit of the measuring interaction to the GKSL-type decoherence framework of Asano \textit{et al.}~\cite{asano2011}.
Figure~\ref{fig:strength} summarizes the measurement-strength continuum that these subsections develop.

\begin{figure}[t]
\centering
\begin{tikzpicture}[>=Latex, every node/.style={font=\small}]
\node[draw, rounded corners, fill=lightblue!25, align=center, text width=0.86\linewidth, inner sep=5pt] (skel) at (0,2)
{Shared dynamical skeleton: ITE $\ket{\Psi(\tau)} \;\Longleftrightarrow\;$ Wick-rotated unitary $\e^{-\i s\calH_{\mathrm{eff}}}\ket{\psi_0} \;\Longleftrightarrow\;$ exact discrete path integral};
\draw[->,very thick] (-4.7,0) -- (5.0,0) node[right]{$\lambda$};
\foreach \x in {-3.2, 0, 3.4} {\draw (\x,0.12)--(\x,-0.12);}
\node[align=center] at (-3.2,-0.7) {weak measurement\\ (unconscious):\\ interference preserved};
\node[align=center] at (0,-0.7) {intermediate:\\ candidates\\ rectified};
\node[align=center] at (3.4,-0.7) {strong measurement:\\ reportable\\ fixation};
\draw[->] (-3.2,1.35) -- (-3.2,0.2);
\draw[->] (0,1.35) -- (0,0.2);
\draw[->] (3.4,1.35) -- (3.4,0.2);
\end{tikzpicture}
\caption{\label{fig:strength}Cognitive dynamics along the measurement-strength axis.
The same dynamical skeleton, the ITE flow of Eq.~\eqref{eq:ite_norm}, its Wick-rotated unitary form of Eq.~\eqref{eq:Ueff}, and the exact discrete path integral of Main Result~\ref{mr:exact_path_integral}, governs cognition at every value of the system--probe coupling strength $\lambda$.
Increasing $\lambda$ moves the induced measurement from weak, where interference among candidate histories is preserved and no outcome is reportable, through an intermediate regime in which candidates are rectified, to strong, where a projective measurement produces a reportable outcome.
The parameter $\lambda$ is the dynamical realization of the measurement-strength axis of Ref.~\cite{emori2026cc}.}
\end{figure}

\subsection{The Unitary Form of Cognitive ITE and Its Exact Discrete Path Integral}
\label{sec:exact_pi}
We model the unconscious search for a target concept as a dynamical process on a finite-dimensional Hilbert space $\calH \cong \bbC^{N}$ spanned by an orthonormal computational basis $\{\ket{x}\}_{x=0}^{N-1}$ of conceptual elements.
The word ``conceptual'' should be read broadly, since the basis may represent symbolic alternatives, distributed semantic vectors~\cite{piantadosi2024}, discretized regions of a neural state manifold, or experimentally reconstructed representational states, where geometric proximity encodes semantic similarity and algebraic operations capture compositional structure.
Two operators play a distinguished role.
\begin{enumerate}[label=(\roman*)]
  \item the \textit{oracle of meaning} $H_{f} = \sum_{x:f(x)=1}\ketbra{x}{x}$, a projector that assigns a unit reward to a mental configuration consistent with the target concept and zero otherwise;
  \item the \textit{initial-state diffusion operator} $\psi_{0} = \ketbra{\psi_0}{\psi_0}$, where $\ket{\psi_0} = N^{-1/2}\sum_{x}\ket{x}$ is the uniform superposition over conceptual elements, representing the maximally indeterminate cognitive state.
\end{enumerate}
In cognitive terms, $H_{f}$ need not be an oracle in the literal computational sense.
It can represent a task goal, a learned category, a self-prior, or a reward-consistent region of representational space, and the initial state $\psi_0$ represents not ignorance but an unresolved field of candidate interpretations.

\paragraph*{From ITE to a commutator unitary via Wick rotation.}
By Sec.~\ref{sec:ite_search}, the cognitive cost minimization is the gradient flow~\eqref{eq:g-cost}, equivalently the ascending ITE on $H_{f}$,
\eq{
\ket{\Psi(\tau)} = \frac{\e^{\tau H_{f}}\ket{\psi_{0}}}{\norm{\e^{\tau H_{f}}\ket{\psi_{0}}}}.\label{eq:ite_norm}
}
A standard fact, made precise in Lemma~\ref{lem:ite_commutator} (Appendix~\ref{sec:ga=ite}), is that for the projector $H_{f}$ the ITE state~\eqref{eq:ite_norm} admits an \textit{exact} representation as the action of a unitary generated by a commutator,
\eq{
\ket{\Psi(\tau)} = \e^{s_{\tau}[H_{f},\psi_{0}]}\ket{\psi_{0}},\label{eq:ite_commutator}
}
for a reparameterized duration $s_{\tau}$ given in Eq.~\eqref{eq:opt_s}.
We interpret \eqref{eq:ite_commutator} as the Wick-rotated form of the ITE~\eqref{eq:ite_norm}.
Since $[H_{f},\psi_{0}]^{\dagger} = -[H_{f},\psi_{0}]$, the operator
\eq{
\calH_{\text{eff}} \equiv \i[H_{f},\psi_{0}]
\label{eq:Heff_def}
}
is Hermitian, and we may write
\eq{
\ket{\psi_{s}} = \e^{s[H_{f},\psi_{0}]}\ket{\psi_{0}} = \e^{-\i s\calH_{\text{eff}}}\ket{\psi_{0}}.\label{eq:Ueff}
}
The right-hand side is the unitary description of a Hermitian Hamiltonian acting for a real ``time'' $s$, and the left-hand side coincides, by~\eqref{eq:ite_commutator}, with the normalized ITE state at imaginary time $\tau(s)$.
The mapping $\tau \leftrightarrow s$ furnished by the inverse of Eq.~\eqref{eq:opt_s} is the content of the Wick rotation in this setting, identifying the descent of a cognitive cost function with a unitary trajectory generated by $\calH_{\text{eff}}$.

\paragraph*{Discrete approximation of the cognitive unitary.}
Following the DBQA construction~\cite{gluza2024,gluza2026} (see Appendix~\ref{sec:ga=ite}), we discretize $U(s) = \e^{-\i s\calH_{\text{eff}}}$ using $\calN$ interleaved product-formula steps,
\eq{
U(s) \;\approx\; U_{\calN} \;\equiv\; (-1)^{\calN}\prod_{k=1}^{\calN} G_{k}(\alpha_k,\beta_k),
\qquad G_{k}(\alpha_k,\beta_k) = \e^{\i\alpha_k\psi_0}\,\e^{\i\beta_k H_{f}},
\label{eq:UN}
}
where the phase angles $\{\alpha_k,\beta_k\}$ are determined by the chosen order of the product formula.
For the original Grover schedule $\alpha_k = \beta_k = \pi$, and for fixed-point search they follow recursive quasi-Chebyshev schedules~\cite{yoder2014,li2026}.
The accuracy of $U(s) \approx U_{\calN}$ depends on the product-formula order, a controlled approximation in the standard sense of Hamiltonian simulation~\cite{chen2022,childs2013,dawson2006}.
The following statement holds \textit{for any choice of phase angles}, since the matrix elements of $U_{\calN}$ admit an exact, rather than approximate, discrete path-integral representation.
The approximation, if any, lies in the relation between $U_{\calN}$ and the target unitary $U(s)$, while the path integral itself is an algebraic identity.

\paragraph*{Insertion of completeness and exact local amplitudes.}
Inserting the resolution of identity $I = \sum_{x}\ketbra{x}{x}$ between every successive factor in \eqref{eq:UN} expresses the discrete propagator as a sum over all $\calN$-step trajectories $x = (x_0,x_1,\dots,x_{\calN})$ in conceptual space,
\eq{
K_{\calN}(x_f,x_i) = \bra{x_f}U_{\calN}\ket{x_i}
= (-1)^{\calN}\!\!\!\sum_{x_1,\dots,x_{\calN-1}}\prod_{k=1}^{\calN}\brakets{x_k}{\e^{\i\alpha_k\psi_0}\e^{\i\beta_k H_{f}}}{x_{k-1}},\label{eq:KN_sum}
}
with $x_0 = x_i,\,x_{\calN} = x_f$.
We evaluate each local factor exactly.
Since $H_{f}\ket{x} = f(x)\ket{x}$ with $f(x) \in \{0,1\}$,
\eq{
\e^{\i\beta_k H_{f}}\ket{x_{k-1}} = \e^{\i\beta_k f(x_{k-1})}\ket{x_{k-1}}.
}
The operator $\e^{\i\alpha_k\psi_0}$ acts on the rank-one projector $\psi_0$.
Using $\psi_0^{2} = \psi_0$, the Taylor series terminates,
\eq{
\e^{\i\alpha_k\psi_0} = I + (\e^{\i\alpha_k}-1)\psi_0,
}
and its matrix elements read
\eq{
\brakets{x_k}{\e^{\i\alpha_k\psi_0}}{x_{k-1}} = \delta_{x_k,x_{k-1}} + \frac{\e^{\i\alpha_k}-1}{N}.\label{eq:diffusion_matrix}
}
Combining \eqref{eq:diffusion_matrix} with the oracle factor, the local transition amplitude becomes
\eq{
\brakets{x_k}{G_{k}(\alpha_k,\beta_k)}{x_{k-1}}
= \exp\!\left\{\i\!\left[\beta_k f(x_{k-1}) \;-\; \i\Log\!\left(\delta_{x_k,x_{k-1}} + \tfrac{\e^{\i\alpha_k}-1}{N}\right)\right]\right\},\label{eq:local_G}
}
where $\Log$ denotes the principal branch of the complex logarithm.
Two algebraic observations make \eqref{eq:local_G} unambiguous.
For $\alpha_k \in \bbR$ the argument of the logarithm is nonzero whenever either $\delta_{x_k,x_{k-1}}=1$, giving $1 + N^{-1}(\e^{\i\alpha_k}-1)$ with magnitude bounded below by $1 - 2/N$, or $\delta_{x_k,x_{k-1}}=0$ and $\e^{\i\alpha_k}\neq 1$, giving $N^{-1}(\e^{\i\alpha_k}-1)$.
The degenerate case $\alpha_k \equiv 0\;(\mathrm{mod}\,2\pi)$ collapses $G_{k}$ to a diagonal oracle factor and the path integral reduces accordingly.
Taking the product over all $\calN$ steps yields the central technical result of this paper.
\begin{mainresult}[Exact discrete cognitive path integral]\label{mr:exact_path_integral}
Let $H_{f}$ be a projector and $\ket{\psi_0}$ the uniform superposition on $\calH = \bbC^{N}$, and let $\{\alpha_k,\beta_k\}_{k=1}^{\calN} \subset \bbR$ be arbitrary phase angles with $\alpha_k \not\equiv 0\;(\mathrm{mod}\,2\pi)$.
The discrete propagator \eqref{eq:KN_sum} admits the \textit{exact} representation
\eq{
K_{\calN}(x_f,x_i) = (-1)^{\calN}\sum_{\mathrm{paths}\;x}\e^{\i S_{\mathrm{eff}}[x]},
\label{eq:KN_exact}
}
with the discrete cognitive action
\eq{
S_{\mathrm{eff}}[x] = \sum_{k=1}^{\calN}\!\left[\beta_k f(x_{k-1}) \;-\; \i\Log\!\left(\delta_{x_k,x_{k-1}} + \frac{\e^{\i\alpha_k}-1}{N}\right)\right].\label{eq:discrete_action}
}
\end{mainresult}
The representation \eqref{eq:KN_exact} follows by direct computation, since Eqs.~\eqref{eq:Ueff}--\eqref{eq:local_G} are algebraic equalities on the finite-dimensional Hilbert space $\bbC^{N}$.
No continuum limit ($\Delta t \to 0$) is invoked, and the multi-dimensional integral over $\d x_1\cdots\d x_{n-1}$ that complicates the measure-theoretic interpretation of the continuum Feynman integral is here replaced by a finite sum.
\begin{remark}[Two layers of approximation]
The relation $U_{\calN} \approx U(s)$ in \eqref{eq:UN} is an approximation in the standard sense of product-formula Hamiltonian simulation, with accuracy controlled by the order of the schedule and the duration $s$~\cite{chen2022,childs2013}.
The identity \eqref{eq:KN_exact} is, by contrast, exact for the given $U_{\calN}$ regardless of its accuracy as an approximation of $U(s)$.
This separation is what keeps the path-integral representation algebraically clean and sidesteps the measure-theoretic ambiguity of the continuum functional integral~\cite{glimm1987}.
\end{remark}

\subsection{Structural Isomorphism with the Feynman Path Integral}
\label{sec:isomorphism}
The discrete action \eqref{eq:discrete_action} corresponds to the discretized Feynman action of Sec.~\ref{sec:path-integral}.
In continuous physics, Trotterization accommodates the non-commutativity $[T,V]\neq 0$, and in our model the product formula resolves the non-commutativity $[\psi_0,H_{f}]\neq 0$.
This delivers the dictionary summarized in Table~\ref{tab:isomorphism}.
\begin{table}[t]
\caption{\label{tab:isomorphism}Structural isomorphism between the Feynman and cognitive path integrals.
The cognitive/consciousness reading gives the intended interpretation of each row for readers in the cognitive sciences.
Continuum position space corresponds to physical trajectories before observation, discrete conceptual space to candidate interpretations or representational states, the non-commutativity $[\psi_0,H_f]\neq 0$ to the fact that task and current state are not jointly fixed, the oracle $H_f$ to a goal, category, self-prior, or reward-consistent target, the diffusion $\psi_0$ to diffusion over candidate interpretations, the action weight $\exp(\i S_{\mathrm{eff}}[x])$ to interference among possible cognitive histories, and the stationary phase to the dominance of cost-minimizing interpretations.}
\centering
\begin{tabular}{lll}
\hline\hline
& \textit{Feynman path integral} & \textit{Cognitive path integral} \\
\hline
Carrier space & Continuum position space $\bbR^{d}$ & Discrete conceptual space $\{0,\dots,N-1\}$ \\
Non-commutativity & $[T,V]\neq 0$ & $[\psi_0,H_{f}]\neq 0$ \\
``Potential energy'' & $V(x)$ & $H_{f}$ (oracle of meaning) \\
``Kinetic energy'' & $T = p^{2}/2m$ & $\psi_0$ (initial-state diffusion) \\
Local move & Gaussian Brownian step (free-particle propagator) & $\delta_{x_k,x_{k-1}} + (\e^{\i\alpha_k}-1)/N$ \\
Action weight & $\exp(\i S[x]/\hbar)$ & $\exp(\i S_{\mathrm{eff}}[x])$ \\
Classical limit & Stationary phase $\delta S = 0$ & Stationary phase under cost minimization \\
\hline\hline
\end{tabular}
\end{table}
The mapping is concrete in two senses.
The oracle $H_{f}$ defines a potential landscape over conceptual elements just as $V(x)$ defines one over physical positions, contributing a phase $\beta_k f(x_{k-1})$ each time the trajectory visits a marked concept.
The projector $\psi_0$ acts as a uniformly diffusing kernel, and its matrix elements \eqref{eq:diffusion_matrix} permit either remaining at the current concept, through $\delta_{x_k,x_{k-1}}$, or jumping uniformly to any other concept, through $(\e^{\i\alpha_k}-1)/N$, in analogy to the Gaussian transition kernel produced in the continuum case by momentum integration of $p^{2}/(2m)$.
This identifies the oracle and diffusion operators of Grover-type search~\cite{grover1996} and of DBQA~\cite{gluza2024,gluza2026} as the cognitive counterparts of potential and kinetic energy.

The discrete cognitive action $S_{\mathrm{eff}}[x]$ also admits a reading in terms of cognitive load~\cite{zou2025}.
Trajectories that wander through conceptual space accumulate large action phases, which is analogous to high extraneous cognitive load.
Embodied cognition, such as gesture or environmental manipulation, can then be understood as a mechanism that biases the diffusion kernel $\psi_0$, so that offloading computational demands to the environment prunes irrelevant conceptual trajectories, reduces extraneous load, and favors the interference of cost-minimizing paths.

\subsection{Why ITE Settles on the Solution: Expansion of the Hilbert--Schmidt Cost}
\label{sec:hs_expansion}
We make explicit the mechanism by which the gradient flow of \eqref{eq:g-cost} converges to the target.
Writing $\Psi = \ketbra{\psi}{\psi}$ for an instantaneous pure state of the cognitive system,
\eq{
\widetilde{C}_{H_{f}}(\Psi)
&= \tfrac{1}{2}\norm{H_{f}-\Psi}_{\text{HS}}^{2}
= \tfrac{1}{2}\Tr\!\left[(H_{f}-\Psi)^{\dagger}(H_{f}-\Psi)\right]\nonumber\\
&= \tfrac{1}{2}\Tr[H_{f}^{2}] + \tfrac{1}{2}\Tr[\Psi^{2}] - \Tr[H_{f}\Psi].
\label{eq:HS_expand_1}
}
Two simplifications follow from the structure of the problem.
First, $H_{f}$ is a projector onto the $M$-dimensional marked subspace, so $H_{f}^{2} = H_{f}$ and $\Tr[H_{f}^{2}] = \Tr[H_{f}] = M$.
Second, $\Psi$ is a pure-state density operator, so $\Psi^{2} = \Psi$ and $\Tr[\Psi^{2}] = \Tr[\Psi] = 1$.
Substituting into \eqref{eq:HS_expand_1},
\eq{
\widetilde{C}_{H_{f}}(\Psi) = \tfrac{M+1}{2} - \Tr[H_{f}\Psi] = \tfrac{M+1}{2} - \brakets{\psi}{H_{f}}{\psi}.\label{eq:HS_expand_final}
}
The first term is a constant determined by the number of marked concepts, so the cost is a strictly decreasing affine function of $\brakets{\psi}{H_{f}}{\psi}$, the projection probability onto the marked subspace.
This probability is maximized, and equals unity, on the solution state $\ket{\psi^{*}} = M^{-1/2}\sum_{x:f(x)=1}\ket{x}$.
Therefore ITE, realized as the Riemannian gradient flow of \eqref{eq:g-cost}, drives the system to the global minimum of $\widetilde{C}_{H_{f}}$, which coincides with the unique pure state lying entirely in the marked subspace.
The convergence rate is controlled by the energy variance along the trajectory, and by Lemma~\ref{lem:ev_ineq} (Appendix~\ref{sec:terminology}) the infidelity $\eps = 1 - F$ to the solution decays exponentially in the imaginary time, with rate set by the spectral gap $\lambda_{1}$ of $H_{f}$.
This is the mechanism that, in the cognitive model, drives an initially uniform mental superposition into a configuration localized on the target concept, whether the process is unconscious or conscious.

This descent also admits a reading through active inference and the free-energy principle.
The oracle $H_{f}$ need not be an externally imposed rule.
It can be viewed as an internalized self-prior, a density model of familiar, homeostatic sensory states acquired through the agent's own developmental experience~\cite{kim2025}, so that the cost-minimizing ITE parallels the minimization of expected free energy, in which the system updates its state to reduce the mismatch between its current configuration and its self-prior.

Psychologically, candidate contents are not selected by an all-or-none switch.
They are reweighted by a cost landscape until one region of representational space dominates, and whether this dominance remains implicit or becomes reportable depends on the measurement coupling discussed next.

\subsection{Unconscious and Conscious Cognition as a Continuum of Measurement Strength}
\label{sec:meas_strength}
We now state the main interpretive result.
In Secs.~\ref{sec:exact_pi}--\ref{sec:hs_expansion} the underlying cognitive dynamics has been identified as the ITE~\eqref{eq:ite_norm} on the oracle $H_{f}$, equivalently the unitary flow~\eqref{eq:Ueff} generated by $\calH_{\text{eff}}$, equivalently the discrete path integral of Main Result~\ref{mr:exact_path_integral}.
This dynamical law applies to both unconscious and conscious cognitive processing.
What differentiates the two regimes is not the form of the dynamics but the \textit{strength of the measurement interaction} between the cognitive system and the neural environment that acts as its probe.

\paragraph*{The indirect measurement model and the strength continuum.}
We adopt the indirect measurement model~\cite{vonneumann1955,ozawa1984,ozawa2023}, in which a measurement of a cognitive observable $\widehat{O}$ on the system $\calH_{S}$ proceeds in two steps.
(i) The system is coupled to a probe $\calH_{P}$ initially in a state $\ket{\phi_{0}}$, through a unitary interaction $U_{\text{int}}(\lambda) = \exp(-\i\lambda \widehat{O}\otimes\widehat{P})$, where $\widehat{P}$ is a probe observable and $\lambda \geq 0$ is the \textit{coupling strength}.
(ii) A projective measurement of $\widehat{P}$ on the probe extracts the outcome.
The coupling strength $\lambda$ interpolates two limits.
\begin{enumerate}[label=(\alph*)]
  \item $\lambda \to 0^{+}$, \textit{weak measurement}: the system is minimally disturbed, only partial statistical information about $\widehat{O}$ is acquired, and coherence between eigenstates of $\widehat{O}$ is preserved over many repetitions.
  \item $\lambda \to \infty$, \textit{strong (projective) measurement}: the system is sharply correlated with the probe, complete information is acquired, and the post-measurement state is an eigenstate of $\widehat{O}$.
\end{enumerate}
Both limits are obtained by varying the same mechanism, the strength of the unitary interaction $U_{\text{int}}(\lambda)$, and there is no ontological discontinuity between them.

\paragraph*{Cognitive interpretation.}
The cognitive system, modeled by $\calH \cong \bbC^{N}$ and evolving under the ITE and path-integral dynamics derived above, is continuously coupled to a neural environment that acts as an internal probe.
The coupling strength $\lambda$ controls the measurement strength in the sense of Refs.~\cite{emori2026cc,emori2026ql}.
When $\lambda$ is small, the cognitive state evolves under ITE with its interference structure intact, and the environment registers only partial information.
This is the unconscious regime, in which a goal-directed optimization continues, still descending the cognitive cost, but no single outcome is reportable and candidates can interfere along the trajectory.
As $\lambda$ increases, the probe registers sharper information about the cognitive state, and in the limit $\lambda \to \infty$ it performs an effective projective measurement, collapsing the optimized superposition into a single reportable thought.
This is the reportable fixation endpoint, of which an insight-like ``Aha'' experience is one phenomenological example.
Intermediate values of $\lambda$ correspond to the preconscious regime, in which candidates are narrowed but not yet fixed.
This coupling strength $\lambda$ is the dynamical realization of measurement strength in Ref.~\cite{emori2026cc}.
Experimentally it corresponds not to a single laboratory control but to a family of constraints that increase readout strength, including report demand, forced choice, time pressure, confidence rating, action selection, and environmental feedback.
\begin{mainresult}[ITE and path-integral structure with a measurement-strength continuum]\label{mr:wick_meas}
Let $\calH_{\text{eff}} = \i[H_{f},\psi_{0}]$ be the effective cognitive Hamiltonian on $\calH = \bbC^{N}$.
Cognitive cost optimization is governed, for all values of the system--probe coupling strength $\lambda \geq 0$, by the same underlying ITE
\eq{
\ket{\Psi(\tau)} = \frac{\e^{\tau H_{f}}\ket{\psi_{0}}}{\norm{\e^{\tau H_{f}}\ket{\psi_{0}}}},
}
equivalently the Wick-rotated unitary $\ket{\psi_{s}} = \e^{-\i s\calH_{\text{eff}}}\ket{\psi_{0}}$, equivalently the exact discrete path integral~\eqref{eq:KN_exact}.
The unconscious-to-conscious transition is the continuous deformation of the coupling strength $\lambda$.
\begin{itemize}[leftmargin=2em]
  \item $\lambda$ small: weak measurement, interference among virtual cognitive trajectories preserved, no unique reportable outcome (unconscious regime);
  \item $\lambda$ moderate: intermediate measurement, candidates progressively rectified (preconscious regime);
  \item $\lambda$ large: strong projective measurement, the optimized superposition collapses to a single eigenstate of the readout basis and a definite conscious report is produced (conscious regime).
\end{itemize}
In all three regimes the cognitive state descends along the Hilbert--Schmidt cost~\eqref{eq:g-cost} and, by Eq.~\eqref{eq:HS_expand_final} and Lemma~\ref{lem:ev_ineq}, converges exponentially fast, in the absence of strong-measurement collapse, to the solution state $\ket{\psi^{*}}$.
\end{mainresult}
The Wick rotation in Sec.~\ref{sec:exact_pi} is therefore a mathematical step, not a physical transition.
It allows the same dynamical content to be written either as a cost-minimizing ITE, the description natural to optimization, or as a unitary path integral, the description natural to interference and to quantum-hardware implementation~\cite{gluza2024,gluza2026,gilyen2019,martyn2021,suzuki2025}.
The physical transition between cognitive regimes is implemented by the measurement-strength continuum $\lambda$, which deforms the coupling between the cognitive system and its neural-environment probe.
In these terms the apparent paradox that the unconscious mind evolves in real time while the conscious mind evolves in imaginary time is dissolved, since both evolve under the same imaginary-time cost descent, and what varies is the strength with which the environment monitors that descent.

\subsection{Weak-Coupling Limit of the Measuring Interaction and GKSL Decoherence}
\label{sec:lindblad}
We make rigorous the relation between the indirect measurement model of Sec.~\ref{sec:meas_strength} and the Gorini--Kossakowski--Sudarshan--Lindblad (GKSL) master equation~\cite{gorini1976,lindblad1976} that underlies the quantum-like decoherence model of cognition introduced by Asano \textit{et al.}~\cite{asano2011}.
For readers in neuroscience, the GKSL equation is a controlled idealization.
It describes the path-integral dynamics when the neural environment is approximated as many fresh, independent, weakly coupled probes, and it is not assumed to be the generic dynamics of the brain.
Two clarifications of terminology are in order.
First, the limit of physical relevance is not a generic weak-measurement limit of the indirect measurement model, but the \textit{weak-coupling limit of the measuring interaction within it}, in which the coupling strength $\lambda$ in the system--probe unitary is sent to zero with the simultaneous time-rescaling $\lambda^{2} = \gamma\,\Delta t$ ($\gamma>0$ fixed, $\Delta t \to 0^{+}$).
The qualifier ``weak measurement'' designates the regime that emerges from this limit, not the limit itself.
Second, the GKSL form is not implied by weak coupling alone, since it additionally requires a Markovian idealization, namely that the neural environment is modeled as a sequence of freshly prepared, statistically independent probes.
Both points are made precise in Theorem~\ref{thm:gksl_limit} below.

\paragraph*{Setup.}
Let $\calH_{S} \cong \bbC^{N}$ denote the cognitive Hilbert space, on which the cognitive system evolves under the effective Hamiltonian $\calH_{\mathrm{eff}} = \i[H_{f},\psi_{0}]$ of Eq.~\eqref{eq:Heff_def}.
We discretize time into intervals of length $\Delta t > 0$.
At each time step the cognitive system interacts with one element of a sequence $\{\calH_{P,n}\}_{n\geq 1}$ of identical probe Hilbert spaces, the $n$-th probe being initially prepared in the same fiducial state $\sigma$ with $\Tr[\sigma]=1$.
The system--probe interaction is the indirect-measurement unitary
\eq{
U_{\lambda} = \e^{-\i\lambda A\otimes B},\label{eq:probe_unitary}
}
with $A = A^{\dagger}$ a cognitive observable on $\calH_{S}$, $B = B^{\dagger}$ a probe observable on $\calH_{P,n}$, and $\lambda \geq 0$ the coupling strength.
Including the free system evolution $e^{-\i\calH_{\mathrm{eff}}\Delta t}$ over the same time step and tracing out the probe, the one-step reduced cognitive dynamics is the completely positive trace-preserving (CPTP) map
\eq{
\Phi_{\Delta t}(\rho) \;\equiv\; \Tr_{P}\!\left[U_{\lambda}\bigl(\e^{-\i\calH_{\mathrm{eff}}\Delta t}\,\rho\, \e^{\i\calH_{\mathrm{eff}}\Delta t}\otimes\sigma\bigr)U_{\lambda}^{\dagger}\right].\label{eq:phi_dt}
}
We adopt the following assumptions throughout this section.
\begin{enumerate}[label=(\roman*),leftmargin=2em]
  \item \textbf{Single use.} Each probe $\calH_{P,n}$ interacts with the cognitive system exactly once.
  \item \textbf{Independent and identical preparation.} The probes are i.i.d., each prepared in the same state $\sigma$.
  \item \textbf{Unconditioned reduction.} Probe outcomes are discarded, and the cognitive dynamics is given by the partial trace in~\eqref{eq:phi_dt}.
  \item \textbf{Zero mean.} $\langle B\rangle_{\sigma} \equiv \Tr[B\sigma] = 0$.
  \item \textbf{Weak-coupling scaling.} The coupling strength is rescaled with the time step as $\lambda^{2} = \gamma\,\Delta t$, with $\gamma > 0$ fixed, and the continuous-time limit $\Delta t \to 0^{+}$ is taken with $t = n\Delta t$ held fixed.
\end{enumerate}
Two remarks on these assumptions are warranted.
Assumption~(iv) is required for the limit~(v) to be well-defined, since otherwise the contribution linear in $\lambda$ to $\Phi_{\Delta t}$ would scale as $\sqrt{\Delta t}$ and diverge upon iteration, and (iv) is harmless because any nonzero $\langle B\rangle_{\sigma}$ can be absorbed into a redefinition of $\calH_{\mathrm{eff}}$.
For convenience and without further loss of generality, we also normalize $\langle B^{2}\rangle_{\sigma} = 1$ by absorbing it into the constant $\gamma$.
Assumptions~(i)--(ii) together encode the Markovian idealization, (iii) selects the unconditioned dynamics, and (v) implements the weak-coupling limit of the measuring interaction.

\paragraph*{Expansion of the one-step CPTP map.}
Expanding $U_{\lambda} = I - \i\lambda(A\otimes B) - \tfrac{\lambda^{2}}{2}(A\otimes B)^{2} + \calO(\lambda^{3})$ and using $[A,[A,\rho]] = \{A^{2},\rho\} - 2A\rho A$, direct computation of the partial trace in~\eqref{eq:phi_dt} gives, to second order in $\lambda$ and first order in $\Delta t$,
\eq{
\Phi_{\Delta t}(\rho) - \rho
&= -\i\Delta t\,[\calH_{\mathrm{eff}},\rho]
- \i\lambda\,\langle B\rangle_{\sigma}\,[A,\rho]\nonumber\\
&\quad + \lambda^{2}\langle B^{2}\rangle_{\sigma}\!\left(A\rho A - \tfrac{1}{2}\{A^{2},\rho\}\right) + \calO(\Delta t^{3/2}).\label{eq:Phi_expand}
}
Assumption~(iv) removes the second term, and (v) substitutes $\lambda^{2} = \gamma\Delta t$.
With the normalization $\langle B^{2}\rangle_{\sigma} = 1$ from~(v), this yields
\eq{
\Phi_{\Delta t}(\rho) = \rho + \Delta t\,\calL(\rho) + \calO(\Delta t^{3/2}),\label{eq:phi_lindblad}
}
with the bounded generator
\eq{
\calL(\rho) \;\equiv\; -\i[\calH_{\mathrm{eff}},\rho] + \gamma\!\left(A\rho A - \tfrac{1}{2}\{A^{2},\rho\}\right).\label{eq:gksl_generator}
}
The operator $\calL$ is of canonical GKSL form with a single Lindblad operator $L = \sqrt{\gamma}\,A$, since $A=A^{\dagger}$ gives $L^{\dagger}L = \gamma A^{2}$ and $L\rho L^{\dagger} = \gamma A\rho A$, so that $L\rho L^{\dagger} - \tfrac{1}{2}\{L^{\dagger}L,\rho\}$ reproduces the dissipator in~\eqref{eq:gksl_generator}.
\begin{theorem}[GKSL limit of repeated indirect measurement]\label{thm:gksl_limit}
Under the setup~\eqref{eq:phi_dt} and Assumptions~(i)--(v),
\eq{
\lim_{\Delta t\to 0^{+}}\Phi_{\Delta t}^{\,\lfloor t/\Delta t\rfloor}(\rho) \;=\; \e^{t\calL}(\rho) \qquad\forall\, t\geq 0,
}
with $\calL$ the GKSL generator~\eqref{eq:gksl_generator}.
Equivalently, the reduced cognitive state $\rho(t)$ obeys the GKSL master equation
\eq{
\frac{\d\rho(t)}{\d t} \;=\; \calL(\rho(t)) \;=\; -\i[\calH_{\mathrm{eff}},\rho(t)] + \gamma\!\left(A\rho(t)A - \tfrac{1}{2}\{A^{2},\rho(t)\}\right).\label{eq:gksl}
}
\end{theorem}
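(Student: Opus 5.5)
The proof is a standard Chernoff/Lie--Trotter argument for a product of near-identity contractions on the finite-dimensional space $\calB(\calH_{S})$. It has three steps: a uniform one-step error bound, a stability estimate, and a telescoping sum.

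\emph{Step 1: one-step bound.} The task is to upgrade the formal expansion \eqref{eq:Phi_expand} to a bound $\norm{\Phi_{\Delta t}-(\mathrm{id}+\Delta t\,\calL)}_{1\to1}\le C\,\Delta t^{3/2}$ for $\Delta t\le 1$, with $C$ depending only on $\norm{\calH_{\mathrm{eff}}}$, $\norm{A}$, $\gamma$ and moments of $B$ in $\sigma$. I would write $\Phi_{\Delta t}=\calE_{\lambda}\circ\calU_{\Delta t}$, where $\calU_{\Delta t}(\rho)=\e^{-\i\calH_{\mathrm{eff}}\Delta t}\rho\,\e^{\i\calH_{\mathrm{eff}}\Delta t}$ and $\calE_{\lambda}(\rho)=\Tr_{P}[U_{\lambda}(\rho\otimes\sigma)U_{\lambda}^{\dagger}]$.

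The first factor obeys $\calU_{\Delta t}=\mathrm{id}-\i\Delta t[\calH_{\mathrm{eff}},\cdot\,]+\calO(\Delta t^{2})$, using $\norm{\calH_{\mathrm{eff}}}\le 2$.

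For the second factor, Taylor's theorem with integral remainder on $\lambda\mapsto\calE_{\lambda}$ gives $\calE_{\lambda}=\mathrm{id}-\i\lambda\langle B\rangle_{\sigma}[A,\cdot\,]+\lambda^{2}\langle B^{2}\rangle_{\sigma}\calD_{A}+R_{3}(\lambda)$, where $\calD_{A}(\rho)=A\rho A-\tfrac12\{A^{2},\rho\}$. The third derivative involves $\Tr[B^{3}\sigma]$-type terms. I would therefore add the mild hypothesis that $B$ is bounded (automatic if $\dim\calH_{P}<\infty$), or else that $\sigma$ has finite third absolute moment of $B$. This gives $\norm{R_{3}}\le c\lambda^{3}=c\gamma^{3/2}\Delta t^{3/2}$.

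Composing the two expansions, using assumptions (iv) and (v) and $\langle B^{2}\rangle_{\sigma}=1$, the cross terms are $\calO(\lambda^{2}\Delta t)=\calO(\Delta t^{2})$, and the bound follows.

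\emph{Step 2: stability.} Every $\Phi_{\Delta t}$ is CPTP, hence a contraction in trace norm on Hermitian operators, with $\norm{\Phi_{\Delta t}^{k}}_{1\to1}\le1$. Likewise $\e^{s\calL}$ is a CPTP contraction for $s\ge0$, since $\calL$ is of GKSL form with $L=\sqrt\gamma A$; I take the Lindblad--GKS theorem as known.

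\emph{Step 3: telescoping.} Set $n=\lfloor t/\Delta t\rfloor$. By the standard identity $X^{n}-Y^{n}=\sum_{k=0}^{n-1}X^{k}(X-Y)Y^{n-1-k}$ with $X=\Phi_{\Delta t}$ and $Y=\e^{\Delta t\calL}$, together with $\norm{\e^{\Delta t\calL}-\mathrm{id}-\Delta t\calL}\le\tfrac12\Delta t^{2}\norm{\calL}^{2}\e^{\Delta t\norm{\calL}}$ and Step 1, one gets
\eq{
\norm{\Phi_{\Delta t}^{n}-\e^{n\Delta t\calL}}_{1\to1}\le n\bigl(C\Delta t^{3/2}+C'\Delta t^{2}\bigr)\le t\bigl(C\Delta t^{1/2}+C'\Delta t\bigr)\to0 .
}
Finally, $\norm{\e^{n\Delta t\calL}-\e^{t\calL}}\le\norm{\calL}\,(t-n\Delta t)\le\norm{\calL}\Delta t\to0$, which yields the limit. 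Because $\calL$ is bounded on a finite-dimensional space, $\rho(t)=\e^{t\calL}\rho$ is differentiable, and it satisfies \eqref{eq:gksl}.

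The main obstacle is Step 1. The paper's $\calO(\Delta t^{3/2})$ is only formal, and making it uniform requires control of the third-order remainder of $\calE_{\lambda}$, i.e.\ a moment condition on $B$ beyond (iv)--(v). Unbounded probe observables would need a separate truncation argument; I would state boundedness of $B$ explicitly. The $\sqrt{\Delta t}$ convergence rate is then just what the Chernoff estimate gives. It is slower than the usual first-order Trotter rate because of the $\lambda^{3}$ terms, which vanish in the average only if $\Tr[B^{3}\sigma]=0$.
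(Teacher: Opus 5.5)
Your proposal is correct and follows the same route as the paper's sketch: the second-order expansion $\Phi_{\Delta t}=\mathrm{id}+\Delta t\,\calL+\calO(\Delta t^{3/2})$, followed by a Chernoff/Trotter--Kato product-formula argument, which you prove directly by telescoping with CPTP contractivity rather than citing it. Your explicit boundedness hypothesis on $B$ is the one the paper uses tacitly when it expands $U_{\lambda}$ in operator norm, but you overstate the need for extra moment conditions: the limit itself, as opposed to the $\sqrt{\Delta t}$ rate, needs only an $o(\Delta t)$ one-step error, and this already holds under the finite second moment implicit in $\langle B^{2}\rangle_{\sigma}=1$ (because $A$ has finite spectrum, $\calE_{\lambda}$ just multiplies the spectral blocks of $\rho$ by values of the characteristic function $u\mapsto\Tr[\sigma\,\e^{\i uB}]$).
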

\begin{proof}[Sketch]
The expansion~\eqref{eq:phi_lindblad} expresses $\Phi_{\Delta t}$ as a uniform first-order perturbation of the identity by the bounded linear generator $\calL$ on the finite-dimensional space of density operators.
The semigroup convergence $\Phi_{\Delta t}^{\lfloor t/\Delta t\rfloor} \to e^{t\calL}$ then follows from the Chernoff approximation theorem, equivalently the Trotter--Kato product formula, for strongly continuous one-parameter semigroups of CPTP maps.
The same conclusion is obtained, in the language of repeated quantum interactions and under the present scaling, by the convergence theorem of Attal and Pautrat~\cite{attal2006}; compare the weak-coupling semigroup construction of Davies~\cite{davies1976}.
The derivation of~\eqref{eq:Phi_expand} is a direct second-order calculation.
With $V \equiv A\otimes B$, one has $U_{\lambda}(\rho\otimes\sigma)U_{\lambda}^{\dagger} = \rho\otimes\sigma - \i\lambda[V,\rho\otimes\sigma] - \tfrac{\lambda^{2}}{2}[V,[V,\rho\otimes\sigma]] + \calO(\lambda^{3})$.
Taking the partial trace and using $\Tr[B\sigma]=\langle B\rangle_{\sigma}$, $\Tr[B^{2}\sigma]=\langle B^{2}\rangle_{\sigma}$, $[A,[A,\rho]] = \{A^{2},\rho\}-2A\rho A$, and $e^{-\i\calH_{\mathrm{eff}}\Delta t}\rho e^{\i\calH_{\mathrm{eff}}\Delta t} = \rho - \i\Delta t[\calH_{\mathrm{eff}},\rho]+\calO(\Delta t^{2})$ yields~\eqref{eq:Phi_expand}, and the cross-terms between the two expansions are of order $\lambda\Delta t = \calO(\Delta t^{3/2})$ under (v).
The conditioned, outcome-keeping variant of the same dynamics is the quantum filtering equation of Bouten, van Handel, and James~\cite{bouten2007}, whose outcome average reproduces~\eqref{eq:gksl}.
\end{proof}
\begin{remark}[Multi-channel couplings and Lamb shift]\label{rem:multichannel}
If the indirect-measurement unitary is generalized to $V = \sum_{k} A_{k}\otimes B_{k}$ with finitely many self-adjoint pairs $(A_{k},B_{k})$, and the analogous zero-mean assumption $\langle B_{k}\rangle_{\sigma} = 0$ holds for every $k$, the same argument yields the multi-channel GKSL equation
\eq{
\frac{\d\rho(t)}{\d t} = -\i[\calH_{\mathrm{eff}} + H_{\mathrm{LS}},\rho(t)] + \sum_{j}\!\left(L_{j}\rho(t)L_{j}^{\dagger} - \tfrac{1}{2}\{L_{j}^{\dagger}L_{j},\rho(t)\}\right),\label{eq:gksl_general}
}
where the Lindblad operators $\{L_{j}\}$ are obtained by diagonalizing the positive Hermitian correlation matrix $C_{kl} = \Tr[B_{l}B_{k}\sigma]$ and the Lamb-shift Hamiltonian $H_{\mathrm{LS}}$ arises from the antisymmetric part of $C$.
The structure of the rest of this section is unaffected by this generalization, so we work with the single-channel form~\eqref{eq:gksl}.
\end{remark}
\begin{corollary}[Recovery of the Asano \textit{et al.}~model]\label{cor:asano_limit}
Under Assumptions~(i)--(v), the quantum-like GKSL decoherence model of cognition introduced by Asano \textit{et al.}~\cite{asano2011} is recovered as the Markovian weak-coupling limit (Theorem~\ref{thm:gksl_limit}) of the present cognitive path-integral framework.
The dissipative term in~\eqref{eq:gksl} encodes the cumulative effect of many \textit{unmonitored weak measurements}~\cite{aharonov1988,dressel2014} performed by the neural environment, and the Hamiltonian part is generated by the cognitive Hamiltonian $\calH_{\mathrm{eff}} = \i[H_{f},\psi_{0}]$.
\end{corollary}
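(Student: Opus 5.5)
The plan is to reduce the Corollary to Theorem~\ref{thm:gksl_limit} by matching the objects in the Asano \textit{et al.}~model~\cite{asano2011} to the objects in Eq.~\eqref{eq:gksl}. The argument has three steps: (a) fix what ``the Asano \textit{et al.} model'' means as a mathematical object; (b) produce a probe construction that realizes it; (c) check that the Hamiltonian part is $\calH_{\mathrm{eff}}$, and justify the weak-measurement reading of the dissipator.

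First, I would write the Asano \textit{et al.}~model in the form used in Ref.~\cite{asano2011}. It is a GKSL master equation on a finite-dimensional cognitive space,
\begin{align}
\frac{\d\rho}{\d t} = -\i[H,\rho] + \sum_{j}\Bigl(L_{j}\rho L_{j}^{\dagger} - \tfrac12\{L_{j}^{\dagger}L_{j},\rho\}\Bigr),
\end{align}
with decision-related Lindblad operators and a mental Hamiltonian $H$. Its characteristic behavior is decoherence in a decision basis, relaxing toward a diagonal (classical) state.

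Second, I would set $H=\calH_{\mathrm{eff}}=\i[H_{f},\psi_{0}]$, which is Hermitian by Eq.~\eqref{eq:Heff_def}. I would then choose probe data realizing the given dissipator.
\begin{itemize}
\item In the single-channel dephasing case, take $A$ equal to the decision observable and $B$ any probe observable with $\Tr[B\sigma]=0$ and $\Tr[B^{2}\sigma]=1$. Theorem~\ref{thm:gksl_limit} then gives exactly Eq.~\eqref{eq:gksl} with $L=\sqrt{\gamma}A$.
\item For a general family $\{L_{j}\}$, invoke Remark~\ref{rem:multichannel} in reverse. Write each non-Hermitian $L_{j}$ through its Hermitian parts $A_{k}$, and pick probe observables $B_{k}$ whose correlation matrix $C_{kl}$ reproduces the required Kossakowski matrix. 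A Lamb shift $H_{\mathrm{LS}}$ may appear here; it should either be absorbed into the Hamiltonian or removed by choosing $C$ real symmetric.
\end{itemize}
This inverse problem is the main obstacle: showing that every positive Kossakowski matrix arises as $\Tr[B_{l}B_{k}\sigma]$ for some zero-mean probe. I would attempt it directly. Take a probe of dimension $d+1$, let $\sigma$ be a pure reference state $\ket{0}$, and set $B_{k}=\ket{v_{k}}\bra{0}+\ket{0}\bra{v_{k}}$ with $\ket{v_{k}}\perp\ket{0}$. These $B_{k}$ have zero mean in $\sigma$ and satisfy $C_{kl}=\braket{v_{l}|v_{k}}$. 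The target positive semidefinite Kossakowski matrix is the Gram matrix of suitable vectors $\ket{v_{k}}$, so the correlations can be matched. The Hamiltonian mismatch coming from the antisymmetric part would then be tracked explicitly.

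Finally, the interpretive clause follows from the derivation itself. The dissipator arises from unconditioned probes (Assumption~(iii)) coupled at strength $\lambda=\sqrt{\gamma\Delta t}\to0$. Each individual interaction is therefore a weak measurement in the sense of~\cite{aharonov1988,dressel2014}, and the outcomes are discarded. The Hamiltonian part is unchanged, because it comes only from the free step $\e^{-\i\calH_{\mathrm{eff}}\Delta t}$ in Eq.~\eqref{eq:phi_dt}.
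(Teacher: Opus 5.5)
Your skeleton matches the paper's. The paper gives no separate argument for Corollary~\ref{cor:asano_limit}; it reads the corollary off from Theorem~\ref{thm:gksl_limit}, with Remark~\ref{rem:multichannel} for several channels. The Hamiltonian part comes solely from the free step $\e^{-\i\calH_{\mathrm{eff}}\Delta t}$ in \eqref{eq:phi_dt}, and the dissipator from discarded outcomes of probes coupled at $\lambda=\sqrt{\gamma\Delta t}\to 0$. Like the paper, you identify the mental Hamiltonian with $\calH_{\mathrm{eff}}$ rather than derive it, so the recovery is structural.

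You go beyond the paper in the realizability step. The paper only asserts that the model of Ref.~\cite{asano2011} is recovered, whereas you check that its dissipator actually lies in the range of the construction. This matters whenever the target dissipator has non-Hermitian jump operators. Theorem~\ref{thm:gksl_limit} alone yields a single Hermitian Lindblad operator $\sqrt{\gamma}A$, hence a unital generator ($\calL(I)=0$). Such a generator always leaves $I/N$ invariant, so it cannot drive all states to a common equilibrium other than $I/N$. Your probe $\sigma=\ketbra{0}{0}$, $B_k=\ketbra{v_k}{0}+\ketbra{0}{v_k}$ is zero-mean with $C_{kl}=\braket{v_l|v_k}$. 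It therefore realizes every positive semidefinite Kossakowski matrix, and hence arbitrary $L_j$.

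The one thing to repair is your hedge about a Lamb shift, because both fallbacks you offer would damage the result. Restricting to real symmetric $C$ makes every effective Lindblad operator Hermitian; already $L=X+\i Y$ needs the complex entry $C_{XY}=-\i$. Absorbing $H_{\mathrm{LS}}$ into the Hamiltonian would contradict the corollary's claim that the Hamiltonian part is $\calH_{\mathrm{eff}}$.

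Neither fallback is needed. For $V=\sum_k A_k\otimes B_k$ with zero-mean $B_k$ and $C_{kl}=\Tr[B_lB_k\sigma]$, the order-$\lambda^{2}$ term of $\Tr_P[U_\lambda(\rho\otimes\sigma)U_\lambda^{\dagger}]$ is
\[
\sum_{k,l}C_{kl}\Bigl(A_k\rho A_l-\tfrac{1}{2}\{A_lA_k,\rho\}\Bigr).
\]
This is already in GKSL form with Kossakowski matrix exactly $C$. The instantaneous collision generates no Hamiltonian correction at this order, so the Lamb-shift term mentioned in Remark~\ref{rem:multichannel} does not arise here.

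Concretely, $\ket{v_X}=\ket{1}$ and $\ket{v_Y}=\i\ket{1}$ give $V=L\otimes\ketbra{1}{0}+L^{\dagger}\otimes\ketbra{0}{1}$. Its second-order contribution is literally $L\rho L^{\dagger}-\tfrac{1}{2}\{L^{\dagger}L,\rho\}$. With this computation in place of the hedge, the Hamiltonian part is exactly $\calH_{\mathrm{eff}}$ and your argument is complete.
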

The framework identifies three structural levels.
The first is the unitary cognitive path integral of Main Result~\ref{mr:exact_path_integral}, which retains the interference structure of the cognitive state.
The second is the GKSL semigroup of Theorem~\ref{thm:gksl_limit}, the Markovian weak-coupling reduction whose unitary part is the Wick-rotated cognitive ITE generated by $\calH_{\mathrm{eff}}$ and whose dissipative part is the cumulative effect of unmonitored weak measurements, recovering the Asano \textit{et al.}~model.
The third is the strong-coupling limit of Main Result~\ref{mr:wick_meas}, which yields the projective collapse identified with reportable fixation.
The unconscious, decoherent, and conscious stages commonly distinguished in the consciousness literature are, in this picture, not three distinct dynamical laws but three positions on the single continuum $\lambda$, with the GKSL semigroup arising only in the additional Markovian regime captured by Assumptions~(i)--(ii).
This is the sense in which the present paper supplies the formal basis of the measurement-strength framework developed in Refs.~\cite{emori2026cc,emori2026ql}.
\begin{remark}[Non-Markovianity and the genericity of the path integral]\label{rem:nonmarkov}
Markovianity is not implied by the weak-coupling scaling~(v) alone.
Assumptions~(i)--(ii), that the neural environment can be modeled as a sequence of freshly prepared, statistically independent probes, constitute an idealization that is not, in general, physiologically warranted for neuro-cognitive systems, since the neural environment carries history-dependent correlations and there is no microscopic justification for the repeated re-preparation of the same probe state $\sigma$ between successive interactions.
Relaxing Assumption~(ii), and more drastically (i), leads, by tracing out the environmental degrees of freedom, to a non-Markovian reduced dynamics whose generator does not admit GKSL form, corresponding in the path-integral language to a nonlocal Feynman--Vernon-type influence functional.
The GKSL model should therefore be regarded as the Markovian approximation of a more general path-integral theory, and the unitary path integral of Main Result~\ref{mr:exact_path_integral} retains its primacy as the descriptive substrate.
A systematic non-Markovian extension is left for future work.
\end{remark}
A word is in order on the interpretational questions that surround the strong-coupling limit~\cite{vonneumann1955}, namely the location of the Heisenberg cut, the choice of measurement basis by the environment, and the question of wave-function collapse.
We do not claim to resolve any of these.
We claim that the structural correspondence between strong projective measurement and the formation of a conscious report is operative regardless of one's stance on these questions, and that its consequences, in particular the discriminating inequalities that follow from non-commutative cognitive structure~\cite{emori2026cc,emori2026ql}, are independently testable.

The GKSL dynamics~\eqref{eq:gksl} also maintains the cognitive system in a nonequilibrium steady state when continuously driven by weak measurements.
From the standpoint of stochastic thermodynamics, the irreversibility of this open-system evolution is quantified by the entropy production rate, which grows with the strength of the dissipator and hence with the coupling through $\gamma\propto\lambda^{2}/\Delta t$.
This gives a formal reading of recent findings in which large-scale brain dynamics exhibit broken detailed balance and increased time-irreversibility during cognitively demanding tasks and conscious states~\cite{nartallo2026}.

\section{Conclusion}
\label{sec:conclusion}
We have developed a path-integral model of cognition whose three elements are imaginary-time evolution on a projector Hamiltonian as the dynamical law of cognitive cost optimization, the Wick rotation $t = -\i\tau$ as the technique that re-expresses this non-unitary descent as an equivalent unitary evolution admitting an exact discrete path integral on a finite-dimensional Hilbert space, and the strength of the unitary measurement coupling to the neural environment as the physical variable that interpolates between the unconscious and conscious regimes.
Three technical results sharpen this picture.
First (Sec.~\ref{sec:ite_dbf}), ITE on a projector Hamiltonian coincides with the double-bracket flow generated by the commutator $[H_{f},\psi_{0}]$, and is the Riemannian gradient flow of a Hilbert--Schmidt cost function whose unique global minimum is the solution state.
Second (Sec.~\ref{sec:exact_pi}, Main Result~\ref{mr:exact_path_integral}), the Wick-rotated unitary description of this cost optimization admits an exact discrete path-integral representation on $\bbC^{N}$, with the oracle of meaning $H_{f}$ acting as a potential energy and the initial-state diffusion projector $\psi_{0}$ acting as a kinetic energy.
Third (Sec.~\ref{sec:hs_expansion}), the Hilbert--Schmidt cost function expands into an affine, decreasing function of the projection probability $\brakets{\psi}{H_{f}}{\psi}$, so that ITE settles exponentially fast on the unique pure state in the marked subspace.
We then showed (Sec.~\ref{sec:meas_strength}, Main Result~\ref{mr:wick_meas}) that the same ITE and path-integral structure governs cognitive processing in both regimes, and that the transition between them is a continuous deformation of the coupling strength $\lambda$ between the cognitive system and a neural-environment probe.
The Markovian weak-coupling limit of the measuring interaction (Sec.~\ref{sec:lindblad}, Theorem~\ref{thm:gksl_limit} and Corollary~\ref{cor:asano_limit}) recovers the GKSL-type decoherence model of Asano \textit{et al.}~\cite{asano2011} as the open-system reduction of our unitary cognitive path integral under the hypothesis of independent and identically prepared probes, and the strong-coupling limit recovers the moment of insight as the reportable fixation of the optimized superposition.
Without the Markovian assumption the reduction is more general and yields a non-Markovian dynamics (Remark~\ref{rem:nonmarkov}), of which the GKSL semigroup is a special case.

The present paper supplies the mathematical and physical foundations of a broader three-paper measurement-strength program.
Ref.~\cite{emori2026cc} surveys major contemporary theories of consciousness (IIT, GNW, RPT, HOT, PP) and argues that their apparent conflicts can be reorganized as different focal points along a single measurement-strength axis~\cite{tsuchiya2015,cogitate2025}.
Ref.~\cite{emori2026ql} shows why this axis requires a contextual, non-Boolean foundation and how classical Boolean records arise by context-forgetting.
The present paper identifies the dynamical realization of that axis with the coupling-strength parameter $\lambda$ of Main Result~\ref{mr:wick_meas}.
The three papers together propose a single re-coordination of vocabularies, in which contextual logic supplies the space of possible questions, path-integral dynamics supplies the evolution of candidate histories, and consciousness science supplies the empirical domain in which measurement strength is manipulated.

For empirical work, the central prediction is that neural and behavioral signatures vary systematically with readout strength.
Report demand, confidence rating, forced choice, time pressure, delayed report, and closed-loop feedback are candidate experimental handles on $\lambda$, and Ref.~\cite{emori2026cc} develops this point as a theory-comparison program for IIT, GNW, RPT, HOT, and PP.
Several further directions are open.
The exact path integral~\eqref{eq:KN_exact} provides discriminating inequalities for the non-Boolean versus non-Bayesian dichotomy studied in Refs.~\cite{emori2026cc,emori2026ql}, whose experimental evaluation, including in the no-report paradigms surveyed there~\cite{tsuchiya2015}, is a natural next step.
The unitary realization of cognitive ITE on the same Hilbert space, combined with the controlled approximation by product formulae, suggests simulation strategies on near-term quantum hardware using QSVT~\cite{gilyen2019,martyn2021,suzuki2025} and DBQA~\cite{gluza2024,gluza2026}.
The non-Markovian extension of Remark~\ref{rem:nonmarkov}, in which the neural environment carries history-dependent correlations and the reduced cognitive dynamics is governed by a Feynman--Vernon-type influence functional, is a direction for capturing the phenomenology of cognition more faithfully.
The intersubjectivity program initiated by Ozawa's theorem~\cite{ozawa2025,khrennikov2024a,khrennikov2024b}, situated within the triadic niche-construction framework of Iriki and collaborators~\cite{iriki2012,iriki2024}, can be connected to the present model by treating the measurement basis itself as a socially constructed structure, an avenue developed in Refs.~\cite{emori2026cc,emori2026ql}.
Finally, the identification of the moment of insight with the strong-coupling limit of a continuous measurement-strength axis invites empirical comparison with neural correlates of conscious access and with the boundary cases of animals, patients, organoids, and AI, for which the measurement-strength language is clarifying.

\begin{acknowledgments}
We are grateful to Masanao Ozawa for valuable discussions.
We thank the organizers and participants at the Ernst Str{\"{u}}ngmann Forum ``Simplicity behind Absurdity: The Power of Quantum Thinking'' held in Frankfurt in September 2025, where this work was initiated.
This work was supported by JST ASPIRE Grant Number JPMJAP2318, JSPS KAKENHI Grant Numbers 24H02200 and 26H02539, JST SPRING Grant Number JPMJSP2119, and the RIKEN Junior Research Associate Program.
\end{acknowledgments}

\appendix

\section{Common Terminology and Techniques}
\label{sec:terminology}
This appendix collects definitions and lemmas used throughout the main text.
\begin{enumerate}
\item The \textit{Hilbert--Schmidt norm} of an operator $A$ on $\calH$ is $\norm{A}_{\text{HS}} = \sqrt{\Tr[A^{\dagger}A]}$.
It is the Schatten 2-norm, and the inner product $\braket{A,B} = \Tr[A^{\dagger}B]$ makes the operator space a Hilbert space.
\item The \textit{operator norm} of $A$ is $\norm{A}_{\text{op}} = \sup_{\norm{\psi}=1}\norm{A\psi}$.
For a finite-dimensional self-adjoint operator, $\norm{A}_{\text{op}} = \max_{i}|\lambda_{i}|$.
Both norms satisfy the triangle inequality $\norm{A+B}\leq\norm{A}+\norm{B}$.
\item The Taylor expansion of $f$ around $x_0$ with Lagrange remainder reads
\eq{
f(x) = \sum_{k=0}^{n-1}\frac{f^{(k)}(x_0)}{k!}(x-x_0)^{k} + \frac{f^{(n)}(\xi)}{n!}(x-x_0)^{n},
}
for some $\xi$ between $x_0$ and $x$.
\item For pure states $\ket{\phi},\ket{\psi}$, the \textit{fidelity} is $F(\ket{\phi},\ket{\psi}) = |\braket{\phi|\psi}|^{2}$, measuring their overlap.
Convergence of ITE is expressed through the infidelity $\eps = 1 - F$.
\end{enumerate}
\begin{lemma}\label{lem:ev_ineq}
Let $\ket{\psi}$ be a pure state and $H = \sum_{i=0}^{d-1}\lambda_{i}\ketbra{\lambda_i}{\lambda_i}$ a Hamiltonian with increasingly ordered eigenvalues and $\lambda_0 = 0$.
If the ground-state fidelity is $F = 1 - \eps$, then
\eq{
E_{k} \geq \lambda_{1}\eps,\qquad V_{k} \leq \lambda_{d-1}^{2}\eps,
}
where $E_{k} = \brakets{\psi}{H}{\psi}$ and $V_{k} = \brakets{\psi}{H^{2}}{\psi} - E_{k}^{2}$.
\end{lemma}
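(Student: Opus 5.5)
Expand $\ket{\psi}$ in the eigenbasis of $H$: write $p_i = |\braket{\lambda_i|\psi}|^{2}$, so that $\sum_i p_i = 1$ and the ground-state fidelity is $F = p_0$. Then $\eps = 1 - p_0 = \sum_{i\geq 1} p_i$. Since $\lambda_0 = 0$, both moments involve only the excited weights:
\eq{
E_k = \sum_{i\geq 1}\lambda_i p_i, \qquad \brakets{\psi}{H^{2}}{\psi} = \sum_{i\geq 1}\lambda_i^{2} p_i .
}
The whole proof reduces to elementary bounds on these two sums over the probability vector $(p_i)$.

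For the energy bound, use the ordering $\lambda_1 \leq \lambda_i$ for all $i\geq 1$, which gives $E_k \geq \lambda_1\sum_{i\geq 1}p_i = \lambda_1\eps$. For the variance bound, drop the subtracted square, $V_k \leq \brakets{\psi}{H^{2}}{\psi}$, since $E_k^{2}\geq 0$ (here $E_k$ is real because $H$ is self-adjoint). Then use $\lambda_i^{2}\leq \lambda_{d-1}^{2}$ for $i\geq 1$, which gives $\brakets{\psi}{H^{2}}{\psi}\leq \lambda_{d-1}^{2}\eps$.

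The main point to check is the sign convention. The ordering together with $\lambda_0=0$ makes every eigenvalue nonnegative, so $\lambda_i^{2}\leq\lambda_{d-1}^{2}$ indeed holds, and $\lambda_1\geq 0$ is the spectral gap. If the ground state is degenerate, so that $\lambda_1 = 0$, the first inequality is trivial. In that case $F$ should be read as the fidelity with the specific ground vector $\ket{\lambda_0}$. The variance bound still holds, since the degenerate components contribute zero to both sums.

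In the application of Sec.~\ref{sec:hs_expansion}, this lemma is applied to $H = I - H_f$ (or $-H_f$ shifted). This choice places the marked subspace at eigenvalue $0$ and gives gap $\lambda_1 = 1$, consistent with how the lemma is invoked there.
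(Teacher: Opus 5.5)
Your proof is correct and is essentially identical to the paper's: it also sets $p_i = |\braket{\lambda_i|\psi}|^{2}$, uses $\lambda_0=0$ and $\eps=\sum_{i\geq1}p_i$, gets $E_k\geq\lambda_1\eps$ from $\lambda_i\geq\lambda_1$, and gets $V_k\leq\sum_{i\geq1}\lambda_i^{2}p_i\leq\lambda_{d-1}^{2}\eps$ by dropping $E_k^{2}$. One small caveat, affecting only your closing side remark: for $M>1$ the operator $I-H_f$ has an $M$-fold degenerate zero eigenvalue, so the gap $\lambda_1=1$ holds only after restricting to the two-dimensional invariant subspace spanned by $\ket{\psi^{*}}$ and the normalized unmarked uniform state, which is where the ITE trajectory stays.
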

\begin{proofof}{Lemma~\ref{lem:ev_ineq}}
Set $p_i = |\braket{\lambda_i|\psi}|^{2}$, so $p_0 = 1-\eps$ and $\sum_{i\geq1}p_i = \eps$.
Then
\eq{
E_{k} = \sum_{i\geq 1}\lambda_i p_i \geq \lambda_1\sum_{i\geq 1}p_i = \lambda_1\eps.
}
For the variance,
\eq{
V_{k} \leq \sum_{i\geq 1}\lambda_i^{2}p_i \leq \lambda_{d-1}^{2}\sum_{i\geq 1}p_i = \lambda_{d-1}^{2}\eps.
}
\end{proofof}
We also record the derivation of Eq.~\eqref{eq:rg}.
The tangent space at $X = UAU^{\dagger} \in \calM(A)$ is $T_{X}\calM(A) = \{[X,\xi]\mid \xi^{\dagger} = -\xi\}$, and the Riemannian metric $\braket{[X,\Omega_1],[X,\Omega_2]} = \Tr[\Omega_1^{\dagger}\Omega_2]$ is induced by the HS structure.
Differentiating $C_{B}(X) = -\tfrac{1}{2}\norm{X-B}^{2}_{\text{HS}}$ in the direction $[X,\Omega]$ and equating with the inner product $\braket{[X,\xi],[X,\Omega]} = \Tr[\xi^{\dagger}\Omega]$ yields $\xi = [X,B]$, whence \eqref{eq:rg} follows.

\section{Grover's Algorithm}
\label{sec:ga}
We review Grover's algorithm~\cite{grover1996} for unstructured database search, treated as a worked example of the framework of Sec.~\ref{sec:pim}.
Given $\calX = \{0,1,\dots,N-1\}$ with $N = 2^{n}$ and an oracle $f:\calX\to\{0,1\}$ marking $M$ items, prepare
\eq{
\ket{\psi_0} = \mathsf{H}^{\otimes n}\ket{0^{\otimes n}} = \frac{1}{\sqrt{N}}\sum_{x}\ket{x},
}
and iterate the Grover operator
\eq{
G_{k}(\alpha_k,\beta_k) = -D(\alpha_k)U_{f}(\beta_k),
\quad D(\alpha) = I - (1-\e^{\i\alpha})\psi_0,
\quad U_{f}(\beta) = I - (1-\e^{\i\beta})H_{f},
}
to approximate the target state $\ket{\psi^{*}} = M^{-1/2}\sum_{x:f(x)=1}\ket{x}$.
The original choice $\alpha_k = \beta_k = \pi$ achieves the asymptotically optimal $\calO(\sqrt{N/M})$ query complexity~\cite{bennett1997,beals1998,zalka1999}.
Refinements include the $\pi/3$-algorithm~\cite{grover2005}, which addresses the soufflé problem~\cite{brassard1997} at the cost of the quadratic speedup, and fixed-point search~\cite{yoder2014,li2026}, which restores optimality through recursive quasi-Chebyshev phase schedules
\eq{
\alpha_{k} = \beta_{\calN-k+1} = -\cot^{-1}\!\left[\tan\!\left(\frac{2\pi k}{\calN}\right)\sqrt{1 - \frac{1}{\gamma^{2}}}\right],
}
with $\gamma = T_{1/\calN}(1/\delta)$ for a desired fidelity $F\geq 1 - \delta^{2}$.
Deterministic variants achieving zero-error final states have also been developed~\cite{long2001,roy2022}.
The geometric and information-theoretic structure of Grover dynamics has been examined~\cite{miyake2001,cafaro2012}, and the framework has been generalized to amplitude amplification~\cite{brassard2002} and unified within QSVT~\cite{gilyen2019,martyn2021,suzuki2025}.

\section{Grover's Algorithm as an Approximation of ITE}
\label{sec:ga=ite}

\subsection{ITE Solves Unstructured Search}
\label{sec:ite_solves}
\begin{lemma}[ITE solves unstructured search]\label{lem:ite_solves}
Given $H_{f}$ and $\ket{\psi_0}$,
\eq{
\lim_{\tau\to\infty}\frac{\e^{\tau H_{f}}\ket{\psi_0}}{\norm{\e^{\tau H_{f}}\ket{\psi_0}}} = \ket{\psi^{*}}.
}
\end{lemma}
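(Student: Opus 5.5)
The plan is to compute the ITE state in closed form rather than argue abstractly, since $H_f$ is a projector. Because $H_f^2 = H_f$, the exponential series collapses to
\eq{
\e^{\tau H_f} = I + (\e^{\tau}-1)H_f = (I - H_f) + \e^{\tau}H_f .
}
We then decompose the uniform superposition along the marked subspace and its complement. With $M$ marked items ($1\leq M<N$) and $\ket{\psi^{\perp}} = (N-M)^{-1/2}\sum_{x:f(x)=0}\ket{x}$, we have
\eq{
\ket{\psi_0} = \sqrt{\tfrac{M}{N}}\,\ket{\psi^{*}} + \sqrt{\tfrac{N-M}{N}}\,\ket{\psi^{\perp}},
}
and $H_f\ket{\psi^{*}}=\ket{\psi^{*}}$, $H_f\ket{\psi^{\perp}}=0$.

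Applying the closed form gives
\eq{
\e^{\tau H_f}\ket{\psi_0} = \e^{\tau}\sqrt{\tfrac{M}{N}}\,\ket{\psi^{*}} + \sqrt{\tfrac{N-M}{N}}\,\ket{\psi^{\perp}} .
}
Since the two components are orthogonal, the squared norm is $(\e^{2\tau}M + N - M)/N$. The normalized state is therefore
\eq{
\cos\theta_\tau\,\ket{\psi^{*}} + \sin\theta_\tau\,\ket{\psi^{\perp}},
\qquad \cos\theta_\tau = \frac{\e^{\tau}\sqrt{M}}{\sqrt{\e^{2\tau}M+N-M}} .
}
As $\tau\to\infty$, $\cos\theta_\tau\to1$ and $\sin\theta_\tau\to0$. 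This gives norm convergence to $\ket{\psi^{*}}$. The explicit rate is $\sin^2\theta_\tau = \eps = (N-M)/(\e^{2\tau}M+N-M) = \calO(\e^{-2\tau})$, consistent with the spectral-gap statement around Lemma~\ref{lem:ev_ineq}.

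There is no real obstacle here. The only point needing care is the hypothesis $M\geq1$: if no item is marked, then $\ket{\psi^{*}}$ is undefined and the flow is trivial. The case $M=N$ is also trivial, because then $\ket{\psi_0}=\ket{\psi^{*}}$. The angle parametrization $\theta_\tau$ will be reused, since it directly yields the reparameterized duration $s_\tau$ of Eq.~\eqref{eq:opt_s}. The reason is that $\e^{s[H_f,\psi_0]}$ acts as a rotation in $\mathrm{span}\{\ket{\psi^{*}},\ket{\psi^{\perp}}\}$, so matching its angle to $\theta_\tau$ fixes $s_\tau$.
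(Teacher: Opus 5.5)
Your proof is correct, but it takes a different route from the paper. The paper gives no computation: it only remarks that, after the sign flip $H\to -H_f$, the lemma is the standard imaginary-time ground-state convergence result.

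That general route needs nonzero overlap with the ground space. Because the top eigenspace of $H_f$ is $M$-fold degenerate, it only yields convergence to the normalized projection $H_f\ket{\psi_0}/\norm{H_f\ket{\psi_0}}$. One must still note that this equals $\ket{\psi^{*}}$ because $\ket{\psi_0}$ is uniform; a non-uniform start would converge to a different state in the marked subspace.

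Your closed form $\e^{\tau H_f} = (I-H_f)+\e^{\tau}H_f$, together with the decomposition along $\ket{\psi^{*}}$ and $\ket{\psi^{\perp}}$, makes that identification automatic. It also gives the exact infidelity $\eps = (N-M)/(\e^{2\tau}M+N-M)$ rather than only a rate. And it exhibits the two-dimensional invariant plane that Lemma~\ref{lem:ite_commutator} and Theorem~\ref{the:ite_traces_geod} rely on. The paper's citation buys generality, since it covers any Hamiltonian with nonzero ground-space overlap, at the cost of these specifics.

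One small correction to your closing remark: Eq.~\eqref{eq:opt_s} gives only the limiting duration $s^{*}$, not $s_\tau$ for finite $\tau$. Set $\theta_0 = \arccos\sqrt{E_0}$, so that $\tan\theta_\tau = \e^{-\tau}\tan\theta_0$. Matching angles then gives $s_\tau = (\theta_0-\theta_\tau)/\sqrt{V_0}$, which tends to $s^{*}$ as $\tau\to\infty$.
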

The positive sign reflects that unstructured search seeks the largest-eigenvalue subspace of $H_{f}$, and via $H \to -H_{f}$ this is the standard ITE convergence statement~\cite{gellmann1951}.

\subsection{Grover's Algorithm as a DBQA}
\label{sec:dbqa}
The double-bracket quantum algorithm~\cite{gluza2024,gluza2026} implements DBF on a quantum computer.
For projector Hamiltonians, the first-order approximation $\ket{\psi_s} = \e^{s[H,\psi_0]}\ket{\psi_0}$ is exact.
\begin{lemma}[Equivalence of ITE and commutator flow for projectors]\label{lem:ite_commutator}
For the projector $H_{f}$ and any ITE time $\tau$, there exists $s_\tau$ such that
\eq{
\frac{\e^{\tau H_{f}}\ket{\psi_0}}{\norm{\e^{\tau H_{f}}\ket{\psi_0}}} = \e^{s_\tau[H_{f},\psi_0]}\ket{\psi_0} = \ket{\psi_{s_\tau}}.\label{eq:dbr}
}
\end{lemma}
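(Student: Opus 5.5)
The plan is to reduce everything to the two-dimensional invariant subspace $\calV = \mathrm{span}\{\ket{\psi^{*}},\ket{\psi^{\perp}}\}$. Here $\ket{\psi^{\perp}} = (N-M)^{-1/2}\sum_{x:f(x)=0}\ket{x}$, and we assume $0<M<N$; otherwise $[H_f,\psi_0]=0$ and both sides equal $\ket{\psi_0}$ with $s_\tau=0$. Writing $\sin\theta = \sqrt{M/N}$ with $\theta\in(0,\pi/2)$, we have $\ket{\psi_0} = \sin\theta\ket{\psi^{*}} + \cos\theta\ket{\psi^{\perp}}$.

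\textbf{The ITE side.} Since $H_f\ket{\psi^{*}}=\ket{\psi^{*}}$ and $H_f\ket{\psi^{\perp}}=0$, we get $\e^{\tau H_f}\ket{\psi_0} = \e^{\tau}\sin\theta\ket{\psi^{*}} + \cos\theta\ket{\psi^{\perp}}$. After normalization this is $\sin\phi_\tau\ket{\psi^{*}}+\cos\phi_\tau\ket{\psi^{\perp}}$ with $\tan\phi_\tau = \e^{\tau}\tan\theta$. For $\tau\geq 0$ we have $\phi_\tau\in[\theta,\pi/2)$, and more generally $\phi_\tau\in(0,\pi/2)$ for all real $\tau$. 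The ITE state is therefore a real rotation of $\ket{\psi_0}$ inside $\calV$ by the angle $\phi_\tau-\theta$.

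\textbf{The commutator side.} Compute $[H_f,\psi_0] = H_f\ketbra{\psi_0}{\psi_0} - \ketbra{\psi_0}{\psi_0}H_f$. Using $H_f\ket{\psi_0} = \sin\theta\ket{\psi^{*}}$, this is $\sin\theta\bigl(\ketbra{\psi^{*}}{\psi_0} - \ketbra{\psi_0}{\psi^{*}}\bigr)$. Expanding $\ket{\psi_0}$ in the basis gives $[H_f,\psi_0] = \sin\theta\cos\theta\bigl(\ketbra{\psi^{*}}{\psi^{\perp}} - \ketbra{\psi^{\perp}}{\psi^{*}}\bigr)$. This operator is a real antisymmetric generator supported on $\calV$, namely $\tfrac12\sin 2\theta$ times the rotation generator $J$ with $J\ket{\psi^{\perp}}=\ket{\psi^{*}}$ and $J\ket{\psi^{*}}=-\ket{\psi^{\perp}}$. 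Since $J^2=-I$ on $\calV$, the exponential closes as
$\e^{sJ\sin2\theta/2} = \cos(\tfrac{s}{2}\sin2\theta)\,I_{\calV} + \sin(\tfrac{s}{2}\sin2\theta)\,J$ on $\calV$, and it acts as the identity on $\calV^{\perp}$. Applied to $\ket{\psi_0}$ it rotates the angle: $\ket{\psi_s} = \sin(\theta+\omega s)\ket{\psi^{*}}+\cos(\theta+\omega s)\ket{\psi^{\perp}}$, where $\omega = \tfrac12\sin2\theta$.

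\textbf{Matching angles.} Setting $\theta+\omega s_\tau = \phi_\tau$ gives
\eq{
s_\tau = \frac{2}{\sin 2\theta}\Bigl(\arctan\bigl(\e^{\tau}\tan\theta\bigr)-\theta\Bigr),
}
which is the reparameterization referred to in the main text. It is well defined and monotone in $\tau$, and it maps $\tau\in[0,\infty)$ onto $[0,(\pi/2-\theta)/\omega)$. The limit $\tau\to\infty$ recovers Lemma~\ref{lem:ite_solves}.

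\textbf{Main obstacle.} The only delicate step is sign bookkeeping in the commutator: one must confirm that $[H_f,\psi_0]$ rotates \emph{toward} $\ket{\psi^{*}}$, so that $s_\tau$ comes out positive for $\tau>0$. Both states also carry real amplitudes with no relative phase, so equality of the angles gives exact vector equality with no global-phase ambiguity.
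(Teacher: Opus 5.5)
Your argument is correct and is essentially the paper's: the paper gives no separate proof of this lemma but rests it on the planar rotation of Theorem~\ref{the:ite_traces_geod}, and you carry out the same two-dimensional reduction, in the basis $\{\ket{\psi^{*}},\ket{\psi^{\perp}}\}$ rather than $\{\ket{\psi_0},\ket{\psi_0^{\perp}}\}$, with your $\omega=\sin\theta\cos\theta$ equal to the paper's $\sqrt{V_0}$. Your explicit $s_\tau$ is more informative than the paper's pointer to Eq.~\eqref{eq:opt_s}, which only gives the endpoint $s^{*}=(\pi/2-\theta)/\omega$; your formula approaches that value as $\tau\to\infty$.
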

The exponential of the commutator is then approximated by product formulae~\cite{dawson2006,chen2022,childs2013}.
The group-commutator identity reads
\eq{
\e^{s[H_{f},\psi_0]} = \e^{\i\sqrt{s}\psi_0}\e^{\i\sqrt{s}H_{f}}\e^{-\i\sqrt{s}\psi_0}\e^{-\i\sqrt{s}H_{f}} + \calO(s^{3/2}),
}
and higher orders give
\eq{
\e^{s[H_{f},\psi_0]}\ket{\psi_0}
\approx \e^{\i t_{2\calN}\psi_0}\cdots\e^{\i t_3 H_{f}}\e^{\i t_2 \psi_0}\e^{\i t_1 H_{f}}\ket{\psi_0}
= (-1)^{\calN}\prod_{k=1}^{\calN}G_{k}(t_{2k},t_{2k-1})\ket{\psi_0},
}
with $t_{k} = c_{k}\sqrt{s}$.
Grover iterations are therefore product-formula approximations of the ITE commutator exponential.
Exact ITE is monotonic and free of overshoot, and the soufflé problem of fixed-angle Grover follows from the accumulation of product-formula errors away from the exact ITE trajectory.

\section{Geodesics and ITE Dynamics for Unstructured Search}
\label{sec:geo}
Using the ITE formulation, the geometry of unstructured search is transparent.
Define the orthogonal complement state
\eq{
\ket{\psi_{0}^{\perp}} = \frac{[H_{f},\psi_0]}{\sqrt{V_0}}\ket{\psi_0} = \frac{H_{f} - E_0 I}{\sqrt{E_0(1-E_0)}}\ket{\psi_0},
}
with $E_0 = \brakets{\psi_0}{H_{f}}{\psi_0} = M/N$ and $V_0 = E_0(1-E_0)$.
\begin{theorem}[ITE traces a geodesic]\label{the:ite_traces_geod}
For projector $H_{f}$ and uniform $\ket{\psi_0}$, the ITE state \eqref{eq:dbr} satisfies
\eq{
\ket{\psi_s} = \cos(s\sqrt{V_0})\ket{\psi_0} + \sin(s\sqrt{V_0})\ket{\psi_{0}^{\perp}},\label{eq:psi_s}
}
which traces the geodesic on $\bbC P^{N-1}$ from $\ket{\psi_0}$ to $\ket{\psi^{*}}$.
The optimal duration is
\eq{
s^{*} = \frac{\arccos(\sqrt{E_0})}{\sqrt{V_0}}.\label{eq:opt_s}
}
\end{theorem}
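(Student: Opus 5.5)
The plan is to reduce everything to a two-dimensional invariant subspace on which the anti-Hermitian generator $W \equiv [H_{f},\psi_0]$ acts as a planar rotation. Throughout I assume $0<M<N$, so that $0<E_0<1$ and $V_0>0$ and $\ket{\psi_0^{\perp}}$ is well defined.

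\emph{Step 1: the generator is a rotation generator on a plane.} First, $\ket{\psi_0^{\perp}}$ is a unit vector orthogonal to $\ket{\psi_0}$. Indeed, $\brakets{\psi_0}{H_f-E_0I}{\psi_0}=0$ and $\norm{(H_f-E_0)\ket{\psi_0}}^2=E_0-E_0^2=V_0$, using $H_f^2=H_f$. Next I compute the action of $W$:
\eq{
W\ket{\psi_0}=(H_f-E_0)\ket{\psi_0}=\sqrt{V_0}\,\ket{\psi_0^{\perp}},
\qquad
W\ket{\psi_0^{\perp}}=-\ket{\psi_0}\brakets{\psi_0}{H_f}{\psi_0^{\perp}}=-\sqrt{V_0}\,\ket{\psi_0},
}
where the second identity uses $\psi_0\ket{\psi_0^{\perp}}=0$ and $\brakets{\psi_0}{H_f(H_f-E_0)}{\psi_0}=E_0-E_0^2$. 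The range of $W$ lies in $\mathrm{span}\{H_f\ket{\psi_0},\ket{\psi_0}\}=\mathrm{span}\{\ket{\psi_0},\ket{\psi_0^{\perp}}\}$. Since $W$ is anti-Hermitian, it therefore annihilates the orthogonal complement of this plane. Hence $W$ restricted to the plane is $\sqrt{V_0}$ times the real matrix $\begin{pmatrix}0&-1\\1&0\end{pmatrix}$. Exponentiating gives Eq.~\eqref{eq:psi_s} directly.

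\emph{Step 2: locating the target and the optimal duration.} Write $H_f\ket{\psi_0}=E_0\ket{\psi_0}+\sqrt{V_0}\ket{\psi_0^{\perp}}$. Normalizing by $\sqrt{E_0}$ gives
\eq{
\ket{\psi^{*}}=\sqrt{E_0}\,\ket{\psi_0}+\sqrt{1-E_0}\,\ket{\psi_0^{\perp}}.
}
Comparing with Eq.~\eqref{eq:psi_s}, $\ket{\psi_s}=\ket{\psi^*}$ exactly when $\cos(s\sqrt{V_0})=\sqrt{E_0}$ and $\sin(s\sqrt{V_0})=\sqrt{1-E_0}$. The first positive solution is $s^*$ of Eq.~\eqref{eq:opt_s}. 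I will also note that the normalized ITE vector $\ket{\psi_0}+(\e^{\tau}-1)H_f\ket{\psi_0}$ lies in the same plane with nonnegative real coefficients. Its angle increases monotonically from $0$ to $\arccos\sqrt{E_0}$ as $\tau$ goes from $0$ to $\infty$. This supplies the bijection $\tau\mapsto s_\tau\in[0,s^*)$ that is consistent with Lemma~\ref{lem:ite_commutator}.

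\emph{Step 3: the geodesic claim.} I expect this step to be the main obstacle, because it depends on stating clearly which metric is meant. I would use the Fubini--Study distance $d(\phi,\psi)=\arccos|\braket{\phi|\psi}|$. For $0\le s\le s^*$, Eq.~\eqref{eq:psi_s} gives $d(\psi_0,\psi_s)=s\sqrt{V_0}$ and $d(\psi_s,\psi^*)=(s^*-s)\sqrt{V_0}$. These two distances add up to $d(\psi_0,\psi^*)=\arccos\sqrt{E_0}$, so the curve is distance-minimizing and traversed at constant speed $\sqrt{V_0}$. Equivalently, the curve is the projection of a great circle in a real 2-plane of $S^{2N-1}$ whose tangent is horizontal, since $\braket{\psi_s|\partial_s\psi_s}=0$. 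Horizontal great circles project to Fubini--Study geodesics. Either argument finishes the proof. I would present the additive-distance argument because it needs nothing beyond the triangle inequality.
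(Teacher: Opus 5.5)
The paper does not actually prove this theorem. Appendix~\ref{sec:geo} only defines $\ket{\psi_{0}^{\perp}}$ through the commutator and then states the result. Your argument therefore supplies the missing proof, and your reduction to the invariant plane $\mathrm{span}\{\ket{\psi_0},\ket{\psi_{0}^{\perp}}\}$ is clearly what that definition points to.

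Steps~1 and~2 are correct and complete. The following all check out:
\begin{itemize}
\item the two actions of $W$ on the basis vectors;
\item the kernel argument via anti-Hermiticity;
\item the decomposition $\ket{\psi^{*}}=\sqrt{E_0}\ket{\psi_0}+\sqrt{1-E_0}\ket{\psi_{0}^{\perp}}$;
\item the monotone angle of $\ket{\psi_0}+(\e^{\tau}-1)H_f\ket{\psi_0}$.
\end{itemize}
Your standing assumption $0<M<N$ is genuinely needed, and the paper leaves it tacit. The monotone-angle argument is worth keeping. It proves Lemma~\ref{lem:ite_commutator}, which the paper also leaves unproved. It also shows that $s^{*}=\lim_{\tau\to\infty}s_\tau$ is never reached at finite $\tau$. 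So Eq.~\eqref{eq:opt_s} gives only the endpoint of the reparameterization, not $s_\tau$ itself as the main text suggests.

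One inference in Step~3 needs tightening. Additivity $d(\psi_0,\psi_s)+d(\psi_s,\psi^{*})=d(\psi_0,\psi^{*})$ at every $s$ does not by itself make a curve minimizing. On $S^{2}$, every point $p$ satisfies $d(\text{north},p)+d(p,\text{south})=\pi$, yet a curve that winds in longitude while descending is not a geodesic. Likewise, $d(\psi_0,\psi_s)=s\sqrt{V_0}$ only bounds the speed from below.

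Here the inference can be rescued in either of two ways:
\begin{itemize}
\item By uniqueness of the minimizing geodesic, since $d(\psi_0,\psi^{*})=\arccos\sqrt{E_0}<\pi/2$ lies below the cut distance of $\bbC P^{N-1}$.
\item More directly, by the pairwise identity $|\braket{\psi_s|\psi_t}|=\cos\bigl((t-s)\sqrt{V_0}\bigr)$ for $0\le s\le t\le s^{*}$, which is immediate from Eq.~\eqref{eq:psi_s}. This makes $s\mapsto\psi_s$ an isometric embedding of $[0,s^{*}]$, scaled by $\sqrt{V_0}$. Equivalently, the Fubini--Study speed $\sqrt{\braket{\dot\psi_s|\dot\psi_s}-|\braket{\psi_s|\dot\psi_s}|^{2}}=\sqrt{V_0}$ gives length $s^{*}\sqrt{V_0}=d(\psi_0,\psi^{*})$.
\end{itemize}
Your horizontal-great-circle argument is already complete as stated, since a horizontal geodesic of length below $\pi/2$ projects to the minimizing one. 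Present that argument or the pairwise-distance version rather than bare additivity.
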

ITE for a projector Hamiltonian coincides with a geodesic, whereas ITE for a general Hamiltonian is steepest descent of the HS cost~\cite{gluza2026,mcmahon2025} and depends on the cost, so a geodesic is defined independently of any cost.
This coincidence is a special property of the projector case and underlies the optimality results recovered in Appendix~\ref{sec:ga=ite}.

\bibliography{ref}

\end{document}